\newcommand{\clocks}{\mathcal{C}}
\newcommand{\parameters}{\mathcal{P}}
\newcommand{\guards}{\mathcal{G}}
\newcommand{\invariants}{\mathcal{I}}
\newcommand{\paramclock}{x_\mathrm{p}}
\newcommand{\hatparamclock}{\hat x_\mathrm{p}}
\newcommand{\newclock}{z}
\newcommand{\clockval}{\nu}
\newcommand{\paramval}{\gamma}
\newcommand{\N}{\mathbb{N}_0}
\newcommand{\R}{\mathbb{R}_{\geq 0}}
\newcommand{\Rpos}{\mathbb{R}_{> 0}}
\newcommand{\goes}[1]{\ensuremath{\xrightarrow{#1}}}
\newcommand{\spacingHorizontal}{ 3 }
\newcommand{\spacingVertical}{ 3 }
\newcommand{\Regions}{\mathit{Reg}}
\newcommand{\CP}{\mathit{Cp}}
\newcommand{\restr}[2]{#1\mathord{\upharpoonright}_{#2}}
\newcommand{\LESS}{\ensuremath{\mathit{LESS}}}
\newcommand{\MORE}{\ensuremath{\mathit{MORE}}}
\newcommand{\EXACT}{\ensuremath{\mathit{EXACT}}}
\newcommand{\fr}{\ensuremath{\mathit{fr}}}
\newcommand{\suc}{\ensuremath{\mathit{succ}}}
\title{Language Emptiness of Continuous-Time \\ Parametric Timed Automata}
\author{Nikola Bene\v{s}\inst{1}\thanks{%
Nikola Bene\v{s} has been supported by the Czech Science Foundation grant project 
no.~GA15-11089S.
} \and Peter Bezd\v{e}k\inst{1}\thanks{%
Peter Bezd\v{e}k has been supported by the Czech Science Foundation grant project 
no.~GA15-08772S.
} \and Kim G. Larsen\inst{2} \and Ji\v{r}\'{\i} Srba\inst{2}}
\institute{Faculty of Informatics, Masaryk University Brno, Czech Republic 
\and Department of Computer Science, Aalborg University, Denmark}
\begin{document}

%\begin{frontmatter}
\maketitle
\begin{abstract}
Parametric timed automata extend the standard timed automata with the
possibility to use parameters in the clock guards. In general, if the
parameters are real-valued, the problem of language emptiness of such
automata is undecidable even for various restricted subclasses. We thus focus
on the case where parameters are assumed to be integer-valued, while the time
still remains continuous. On the one hand, we show that the problem remains
undecidable for parametric timed automata with three clocks and one parameter.
On the other hand, for the case with arbitrary many clocks where only one of
these clocks is compared with (an arbitrary number of) parameters, we show that
the parametric language emptiness is decidable. The undecidability result
tightens the bounds of a previous result which assumed six parameters, while
the decidability result extends the existing approaches that deal with 
discrete-time semantics only.  To the best of our knowledge, this is the first
positive result in the case of continuous-time and unbounded integer
parameters, except for the rather simple case of single-clock automata.
\end{abstract}
%\end{frontmatter}

\section{Introduction} \label{sec:intro}
Timed automata~\cite{ad:tcs94} are a popular formalism used for modelling 
of real-time systems. In the classical definition,
the clocks in guards are compared to fixed constants and one of the
key problems, decidable in PSPACE~\cite{ACD:LICS:90}, 
is the question of language 
emptiness. More than 20 years ago, Alur, Henzinger and Vardi~\cite{AHV:STOC:93} 
introduced a parametric variant of the language emptiness problem
where clocks in timed automata
can be additionally compared to a number of parameters. A clock
is \emph{nonparametric} if it is never compared with any 
of the parameters, otherwise the clock is \emph{parametric}.
The parametric language emptiness problem asks whether the parameters
in the system can be replaced by constants so
that the language of the resulting timed automaton becomes nonempty.

Unfortunately, the parametric language emptiness problem is 
undecidable for timed automata with three parametric 
clocks~\cite{AHV:STOC:93}. Yet Alur, Henzinger and Vardi
established a positive decidability result in the case of a single parametric
clock. This decidability result was recently extended by Bundala and 
Ouaknine~\cite{BO:MFCS:14} to the case with two parametric clocks and
an arbitrary number of nonparametric clocks. Both positive results 
%in~\cite{AHV:STOC:93,BO:MFCS:14} 
are restricted to the discrete-time
semantics with only integer delays. The problem of 
decidability of integer parametric language emptiness in the continuous-time
semantics has been open for over 20 years.
The parametric language emptiness problem has two variants, which
we call \emph{reachability} (existence of a~parameter valuation
s.t.~the language is nonempty) and \emph{safety} (existence of a~parameter
valuation s.t.~the language is empty).
 
Our main contributions, summarised in Table~\ref{tbl:results},
are: (i) undecidability of the reachability and safety
problems (in discrete and continuous-time semantics) for three parametric
clocks, no additional nonparametric clocks and one integer parameter 
and (ii) decidability of the reachability and safety
problems in the continuous-time semantics for one parametric clock with an
arbitrary number of integer parameters and an unlimited number of additional
nonparametric clocks. For reachability the problem is further decidable
in NEXPTIME. 

\begin{table}[t]
\caption{Decidability of the language (non)emptiness problems}\label{tbl:results}
\vspace{3mm}
\centering
\setlength{\tabcolsep}{4pt}
\renewcommand{\arraystretch}{1.1}
\begin{tabular}{l|c|c|c}
&discrete time & continuous time & continuous time \\
&integer parameters & integer parameters & real parameters \\
\hline
$n$ clocks, $m$ parameters & \multirow{2}{*}{decidable~\cite{AHV:STOC:93}} & \multirow{2}{*}{\textbf{decidable}} & \multirow{2}{*}{undecidable~\cite{miller2000decidability}}
\\
1 parametric clock only & & &
\\\hline
3 clocks, 1 parameter & \textbf{undecidable} & \textbf{undecidable} & undecidable~\cite{miller2000decidability}
\\\hline
3 clocks, 6 parameters & undecidable~\cite{AHV:STOC:93}  & undecidable~\cite{AHV:STOC:93}  & undecidable~\cite{AHV:STOC:93} 
\end{tabular}
\vspace{-2mm}
\end{table}

\paragraph{Related work.}
Our undecidability result holds both for discrete and continuous time
semantics and it uses only a single parameter with three parametric clocks,
hence strengthening the result from~\cite{AHV:STOC:93} where six
parameters were necessary for the reduction. 
In~\cite{BO:MFCS:14} the authors established NEXPTIME-completeness
of the parametric reachability problem for the case of a single parametric clock
but only for the discrete-time semantics.
Parametric TCTL model checking of timed automata, in the discrete-time
setting, was also studied in~\cite{BR:FSTTCS:03,Wang1996131}.
Our decision procedure 
for one parametric clock is, to the best of our knowledge, the first one
that deals with continuous-time semantics without any restriction on
the usage of parameters and without bounding the range of the parameters. 

Reachability for parametric timed automata was shown decidable for certain 
(strict) subclasses of parametric timed automata, 
either by bounding the range of  
parameters~\cite{j-TACAS-13} or by imposing syntactic restrictions
on the use of parameters as in L/U automata~\cite{BT:FMSD:09,hune2001linear}.
The study of 
parametric timed automata in continuous time with parameters 
ranging over the rational or real numbers showed undecidability already for one
parametric clock~\cite{miller2000decidability}, or for two parametric
clocks with exclusively strict guards~\cite{Doyen2007208}.
We thus focus solely on integer-valued parameters in this paper.

Parametric reachability problems for interrupt timed automata
were investigated by B\'{e}rard, Haddad, Jovanovi\'{c} and 
Lime~\cite{BHJL:RP:13} with a number of positive decidability results
although their model is incomparable
with the formalism of timed automata studied in this paper. 
Other approaches include the inverse method of~\cite{ACFE:ENTCS:08} 
where the authors
describe a procedure for deriving constrains on  parameters
in order to satisfy that timed automata remain time-abstract
equivalent, however, the termination of the procedure is in general
not guaranteed.

\section{Definitions} \label{sec:definitions}
We shall now introduce parametric timed automata, the studied problems
and give an example of a parametric system for alarm sensor coordination.

%\subsection{Parametric Timed Automata}
Let $\N$ denote the set of nonnegative integers and
$\R$ the set of nonnegative real numbers.
Let $\clocks$ be a finite set of \emph{clocks} and let $\parameters$ be
a finite set of \emph{parameters}. A \emph{simple clock constraint}  
is an expression of the form $x \bowtie c$ where $x \in \clocks$,
$c \in \N \cup \parameters$ and $\bowtie \ \in \{ <, \leq, =, \geq, > \}$. 
A \emph{guard} 
is a conjunction of simple clock constraints, we denote the set of all
guards by $\guards$. A conjunction of simple clock constraints that
contain only upper bounds on clocks, i.e. $\bowtie \ \in \{ <, \leq \}$, 
is called an \emph{invariant} and 
the set of all invariants is denoted by $\invariants$.

A \emph{clock valuation} is a function $\clockval: \clocks \rightarrow \R$
that assigns to each clock its nonnegative real-time age and
\emph{parameter valuation} is a function 
$\paramval: \parameters \rightarrow \N$ that assigns to each parameter
its nonnegative integer value.
Given a clock valuation~$\clockval$, a parameter valuation
$\paramval$ and a guard (or invariant) $g \in \guards$, we write
$\clockval,\paramval \models g$ if the guard expression $g$,
after the substitution of all clocks $x \in \clocks$ with 
$\clockval(x)$ and all parameters $p \in \parameters$ with $\paramval(p)$,
is true.
By $\clockval_0$ we denote the initial clock valuation
where $\clockval_0(x) = 0$ for all $x \in \clocks$. For a clock valuation
$\clockval$ and a delay $d \in \R$, we define the clock valuation
$\clockval+d$ by $(\clockval+d)(x)=\clockval(x)+d$ for all $x \in \clocks$.

\begin{definition}[Parametric Timed Automaton]
A \emph{parametric timed automaton (PTA)} over the set of clocks $\clocks$
and parameters $\parameters$ is a tuple
$A = (\Sigma, L, \ell_0, F, I, \goes{})$ where
$\Sigma$ is a finite \emph{input alphabet}, $L$ is a finite set
of \emph{locations}, $\ell_0 \in L$ is the
\emph{initial location}, $F \subseteq L$ is the set of 
\emph{final (accepting) locations}, $I: L \rightarrow \invariants$
is an \emph{invariant function} assigning invariants to locations, and
$\goes{}\, \subseteq L \times \guards \times \Sigma \times 2^{\clocks} 
\times L$ is the set of transitions, written as
$\ell \goes{g,a,R} \ell'$ whenever $(\ell,g,a,R,\ell') \in \, \goes{}$.
\end{definition}

For the rest of this section, 
let $A = (\Sigma, L, \ell_0, F, I, \goes{})$ be a fixed PTA.
We say that a clock $x \in \clocks$ is a \emph{parametric clock} in $A$
if there is a simple clock constraint of the form $x \bowtie p$ with
$p \in \parameters$ that appears in a~guard or an invariant of~$A$.
Otherwise, if the clock $x$ is never compared to any parameter,
we call it a~\emph{nonparametric clock}.

A configuration of $A$
is a pair $(\ell,\clockval)$ where $\ell \in L$ is the current location
and $\clockval$ is the current clock valuation. For every parameter
valuation $\paramval$ we define the corresponding timed transition
system $T_\paramval(A)$ where states are all configurations $(\ell,\clockval)$
of $A$ that satisfy the location invariants, i.e. 
$\clockval,\paramval \models I(\ell)$, and the transition relation
is defined as follows:
\begin{itemize}
\item $(\ell,\clockval) \goes{d} (\ell,\clockval+d)$ 
 where $d \in \R$ if 
%for all $x \in \clocks$ we have $\clockval'(x) = \clockval(x) + d$ and 
$\clockval+d,\paramval \models I(\ell)$; 
\item $(\ell,\clockval) \goes{a} (\ell',\clockval')$ 
   where $a \in \Sigma$
   if there is a transition $\ell \goes{g,a,R} \ell'$ in $A$
   such that $\clockval,\paramval \models g$ and
   $\clockval',\paramval \models I(\ell')$ where
   for all $x \in \clocks$ we define 
   $\clockval'(x) = 0$ if $x \in R$ and $\clockval'(x) = \clockval(x)$
   otherwise.
\end{itemize}

%Let $A = (\Sigma, L, \ell_0, F, I, \goes{})$ be a PTA.
A \emph{timed language} of $A$ under a parameter 
valuation $\paramval$, denoted by 
$L_\paramval(A)$, is the collection of all accepted \emph{timed words} 
of the form
$(a_0,d_0)(a_1,d_1)\ldots(a_n,d_n) \in (\Sigma\times\R)^*$ 
%where $a_i \in \Sigma$ and $d_i \in \R$ for all $i$, $0 \leq i \leq n$, 
such that in the transition system $T_\paramval(A)$ there is a computation 
$ (\ell_0,\clockval_0) \goes{d_0} (\ell'_0,\clockval'_0) \goes{a_0}
   (\ell_1,\clockval_1) \goes{d_1} %(\ell'_1,\clockval'_1) \goes{a_1} 
\cdots
 \goes{a_{n-1}}
   (\ell_n,\clockval_n) \goes{d_n} (\ell'_n,\clockval'_n) \goes{a_n}
   (\ell_{n+1},\clockval_{n+1})$
where 
$\ell_{n+1} \in F$.

We can now define two problems for parametric timed
automata, namely the reachability problem (reaching desirable locations)
and safety problem (avoiding undesirable locations).
Note that the problems are not completely dual, as the safety problem
contains a hidden alternation of quantifiers.

\begin{problem}[Reachability Problem for PTA]
Given a PTA $A$, is there a parameter valuation $\paramval$ such that
$L_\paramval(A) \not= \emptyset$ ?
\end{problem}

\begin{problem}[Safety Problem for PTA]
Given a PTA $A$, is there a parameter valuation $\paramval$ such that
$L_\paramval(A) = \emptyset$ ?
\end{problem}

%\subsection{Motivating Example}
We shall now present a small case study of a wireless fire alarm
system~\cite{alarm:2014} modelled as a parametric timed automaton.
In the alarm setup, a number of wireless sensors communicate with the
alarm controller over a limited number of communication channels
(in our simplified example we assume just a single channel).
The wireless alarm system
uses a variant of Time Division Multiple Access (TDMA) protocol in order
to guarantee a safe communication of multiple sensors over a shared
communication channel. In TDMA the data stream is divided into frames
and each frame consists of a number of time slots allocated for exclusive use
by the present wireless sensors. Each sensor is assigned a single slot in
each frame where it can transmit on the shared channel.

\begin{figure}[t]
\centering
\begin{minipage}{0.35\textwidth}
\subfloat[Sensor 1]{\label{fig:sensor1}
        \begin{tikzpicture}[->,>=stealth',auto,node distance=3cm, semithick,scale=0.95,every node/.style={transform shape}]
          \node[accepting,state,align=center,minimum size=1.5cm] (l1) {};
          \node[state,align=center,minimum size=1.5cm] (l2) [right of=l1] {$x_1 < 3$};

          \path[->] (l1) edge [] node {\begin{minipage}{0.55in}\begin{center}$\mathit{wakeup}_1?$ \\$x_1 := 0$\end{center}\end{minipage}} (l2);
          \path[->] (l2) edge [right,bend left=62,distance=21] node [above] {\begin{minipage}{0.65in}\begin{center}$ 2 < x_1 < 3$ \\$\mathit{result_1}!$\end{center}\end{minipage}} (l1);
          \path[->] (l2) edge [loop below,distance=30] node {$\mathit{wakeup}_1?$} (l2);
        \end{tikzpicture}
}

\subfloat[Sensor 2]{\label{fig:sensor2}
        \begin{tikzpicture}[->,>=stealth',auto,node distance=3cm, semithick,scale=0.95,every node/.style={transform shape}]
          \node[accepting,state,align=center,minimum size=1.5cm] (l1) {};
          \node[state,align=center,minimum size=1.5cm] (l2) [right of=l1] {$x_2 < 17$};

          \path[->] (l1) edge [] node {\begin{minipage}{0.65in}\begin{center}$\mathit{wakeup}_2?$\\ $x_2 := 0$\end{center}\end{minipage}} (l2);
          \path[->] (l2) edge [right,bend left=62,distance=21] node [above] {\begin{minipage}{0.75in}\begin{center}$ 2 < x_2 < 3$\\  $\mathit{result}_2!$\end{center}\end{minipage}} (l1);
          \path[->] (l2) edge [right,bend right=62,distance=20] node [above] {\begin{minipage}{0.75in}\begin{center}$ 16 < x_2 < 17$\\ $\mathit{result}_2!$\end{center}\end{minipage}} (l1);
          \path[->] (l2) edge [loop below,distance=30] node {$\mathit{wakeup}_2?$} (l2);
        \end{tikzpicture}
}
\end{minipage}
\begin{minipage}{0.62\textwidth}
\subfloat[Controller with parameters $p_1$ and $p_2$]{\label{fig:controller}
\begin{tikzpicture}[->,>=stealth',auto,node distance=2.8cm, semithick, on grid, scale=0.95,every node/.style={transform shape}]
%\draw[gray,step=0.25] (-2,2) grid (8, -8);
\clip (-0.89,0.85) rectangle (6.89, -7.01);%
		\node[state,align=center,minimum size=1.5cm] (l3) {$x\leq p_2$\\$y \leq 20$};
    	\node[state,align=center,minimum size=1.5cm] (l4) [below=1.4*\spacingVertical of l3] {$x<2$\\$y \leq 20$};

		\node[accepting,state,align=center,minimum size=1.3cm] (l1) [right=2*\spacingHorizontal of l3] {$x<2$\\$y \leq 20$};
		\node[state,align=center,minimum size=1.5cm] (l2) [below=1.4*\spacingHorizontal of l1] {$x \leq p_1$\\$y \leq 20$};
		
     	\node[state,align=center,minimum size=1.5cm] (l5) [below right=0.7*\spacingVertical and \spacingHorizontal of l3] {$\mathit{fail}$};
      	\node[state,align=center,minimum size=1.3cm] (l6) [below left=0.7*\spacingVertical and \spacingHorizontal of l2] {$\mathit{timeout}$};

		\path[->] (l1) edge [] node [left] {\begin{minipage}{0.49in}$x<2$\\ $\mathit{wakeup}_1!$ \end{minipage}} (l2);
		\path[->] (l2) edge [bend left=5] node [below] {$x<p_1$ $\mathit{result}_1?$ $x:=0$} (l4);
		\path[->] (l2) edge [bend right=5] node [above] {$x=p_1 $ $x:=0$} (l4);
		\path[->] (l4) edge [] node [right]{\begin{minipage}{1.55in}$x<2$\\ $\mathit{wakeup}_2!$\end{minipage}} (l3);
		\path[->] (l3) edge [bend right=5] node [below] {$x<p_2 $ $x:=0$ $y:=0$} (l1);
		\path[->] (l3) edge [bend left=5] node [above] {$x<p_2 $ $ \mathit{result}_2? $ $x:=0$ $y:=0$} (l1);

		\path[->] (l1) edge [bend right=17] node [sloped,below] {$\mathit{result}_1?$} (l5);
		\path[->] (l1) edge [bend left=17] node [sloped,above] {$\mathit{result}_2?$} (l5);
		\path[->] (l2) edge [] node [sloped,above] {$\mathit{result}_2?$} (l5);
		\path[->] (l3) edge [] node [sloped,above] {$\mathit{result}_1?$} (l5);
		\path[->] (l4) edge [bend right=17] node [sloped,above] {$\mathit{result}_2?$} (l5);
		\path[->] (l4) edge [bend left=17] node [sloped,below] {$\mathit{result}_1?$} (l5);

		\path[->] (l1) edge [out=-70,in=0,distance=98] node [above,very near end] {$y=20$} (l6); 
		\path[->] (l2) edge [] node [left] {$y=20$} (l6);
		\path[->] (l3) edge [out=-110,in=180,distance=98] node [above,very near end] {$y=20$} (l6); 
		\path[->] (l4) edge [] node [right] {$y=20$} (l6);
\end{tikzpicture}%
}

\end{minipage}%
 
\caption{Wireless Fire Alarm System}
\label{fig:WFAS}
\vspace{-3mm}
\end{figure}
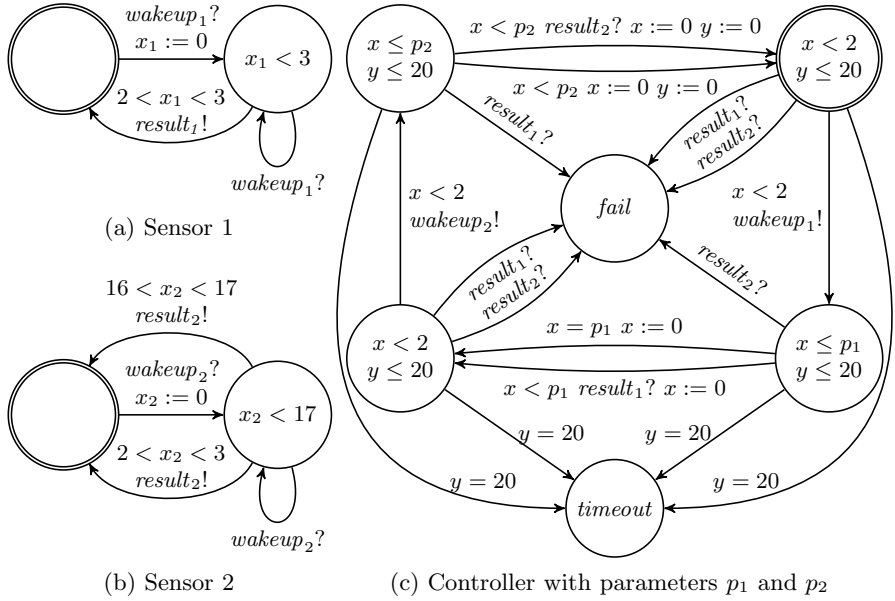

We model each sensor as a timed automaton with two locations as
shown in Figure~\ref{fig:sensor1} and~\ref{fig:sensor2}.
The sensor in Figure~\ref{fig:sensor1} waits in its initial
location until it receives a $\mathit{wakeup}_1$ message from the controller.
After this, it takes strictly between 2 to 3 seconds to gather the
current status of the sensor and transmit it as $\mathit{result}_1$ message back
to the controller. Any subsequent wakeup signals during the transmission 
phase are ignored and after the transmission phase is finished,
the sensor is ready to receive another wakeup signal.
The sensor in Figure~\ref{fig:sensor2} has a more complex behaviour
as transmitting the answer $\mathit{result}_2$ can take either 
strictly between 2 to 3 seconds, or 16 to 17 seconds.

The controller presented in Figure~\ref{fig:controller} is responsible
for synchronising the two sensors and for assigning them their time 
slots so that no transmissions interfere. The parametric clock $x$ of the 
controller determines the size of the time slots. First, it takes at most 
2 seconds for the controller to wake up the first sensor after which it waits
until the elapsed time reaches the value of the parameter~$p_1$.
If it receives the result of the reading of the first
sensor in this time slot, it moves immediately into the next location where it
performs the wakeup of the second sensor. If the first sensor does not
deliver any result and the clock $x$ reaches the value $p_1$, it
also moves to the next location. Now a symmetric control is performed for the
second sensor. If any of the two sensors transmit during the time the
controller transmits the wakeup signals, we enter the
location $\mathit{fail}$. The fail location is also reached if 
$\mathit{result}_2$ is received in the time slot of the first sensor and
vice versa. The second clock $y$ is used to simply measure the duration
of the whole frame; whenever the duration of the frame reaches 20 seconds,
the controller enters the $\mathit{timeout}$ location. 

We assume a standard handshake synchronisation of the controller and
the two sensors running in parallel
that results in a flat product timed automaton with two 
parameters $p_1$ and $p_2$. Note that $x$ is the only parametric clock 
in our example. Now, our task is to find suitable values
of the parameters that guide the duration of the time slots for the
two sensors so that there is no behaviour of the protocol where
it fails or timeouts. This question is equivalent to the safety problem
on the constructed PTA where 
we mark $\mathit{fail}$ and $\mathit{timeout}$ as the accepting 
(undesirable) locations.

The obvious parameter valuation where $\paramval(p_1)=5$ and
$\paramval(p_2)=19$ guarantees that the location $\mathit{fail}$
is unreachable but it is not an acceptable solution as the duration of the
frame becomes $24$ and we reach $\mathit{timeout}$. However, there is another
parameter valuation where $\paramval(p_1)=5$ and $\paramval(p_2)=9$
that guarantees that there is no possibility to fail or timeout. 
This is due to the fact that if the response time of the second sensor
is too long, it skips one slot and the answer fits into 
an appropriate slot in the next frame.

In Section~\ref{sec:decidability} we provide an algorithmic solution for
finding such a parameter valuation that guarantees a~given safety/reachability
criterion.
Note that as we are concerned with language (non)emptiness only, we employ two
simplifications in the rest of the paper: First, we assume that the considered PTA
have no invariants, as moving all invariants to guards preserves the
language. Second, we assume that the alphabet is a~singleton set 
%$\Sigma = \{a\}$,
as renaming all actions into a~single action preserves language (non)emptiness.

\section{Undecidability for Three Parametric Clocks} \label{sec:undecidability}
We shall now provide a reduction from the halting/boundedness problems of
two counter Minsky machine to the reachability/safety problems on 
PTA.
A~\emph{Minsky machine} with two nonnegative counters $c_1$ and $c_2$ is
a sequence of labelled instructions
$1:\mathit{inst}_1;\ 2:\mathit{inst}_2;\ \ldots, n:\mathit{inst}_n$
where $\mathit{inst}_n = \mathit{HALT}$ 
and each $\mathit{inst}_i$, $1\le i < n$, is of one of the following forms
(for $r\in \{1,2\}$ and $1\le j,k \le n$):
\begin{itemize}
\newlength{\tablen}
\settowidth\tablen{(Test and Decrement) \quad}
\item
\makebox[\tablen][l]{(Increment)}
\texttt{$i$: $c_r$++; goto $j$}
\item
\makebox[\tablen][l]{(Test and Decrement)}
\texttt{$i$: if $c_r$=0 then goto $k$ else ($c_r$--; goto $j$)}
\end{itemize}

A configuration is a triple $(i, v_1, v_2)$ where $i$ is the current
instruction and $v_1, v_2 \in \N$ are the values of the counters $c_1$
and $c_2$, respectively. A computation step between configurations is
defined in the natural way. If starting from the initial
configuration $(1,0,0)$ the machine
reaches the instruction $\mathit{HALT}$ (note that the computation
is deterministic)
then we say it \emph{halts}, otherwise it \emph{loops}.
The problem whether a given Minsky machine halts
is undecidable~\cite{Minsky:book}. The boundedness problem,
i.e. the question whether there is a constant $K$ such that
$v_1+v_2 \leq K$ for any configuration $(i,v_1,v_2)$ reachable
from $(1, 0, 0)$, is also undecidable~\cite{KCH:ACCS:10}.

The reduction from a two counter Minsky machine to PTA with a single
parameter $p$ and three parametric clocks $x_1$, $x_2$ and $z$ is
depicted in Figure~\ref{fig:encoding}. The reduction rules are shown
only for the instructions handling the first
counter. The rules for the
second counter are symmetric.
We also omit the transition labels as they are not relevant for the emptiness
problem. The reduction preserves the property that whenever
we are in a configuration $(\ell_i,\clockval)$ where
$\clockval(z)=0$ then $\clockval(x_1)$ and $\clockval(x_2)$ represent
the exact values of the counters $c_1$ resp. $c_2$, and
the next instruction to be executed is the one with label $i$.
Note also that there are no invariants used in the constructed automaton.

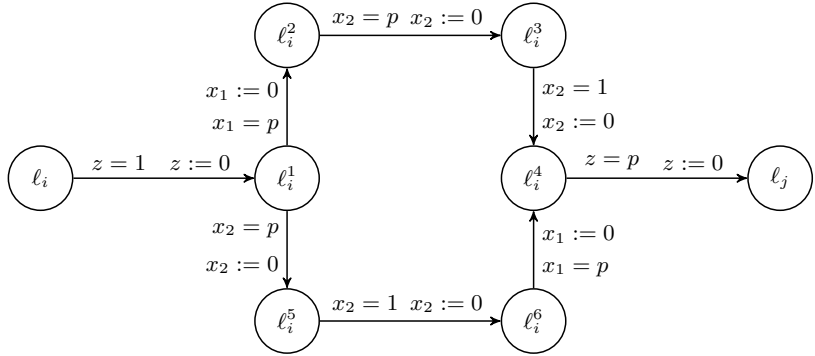
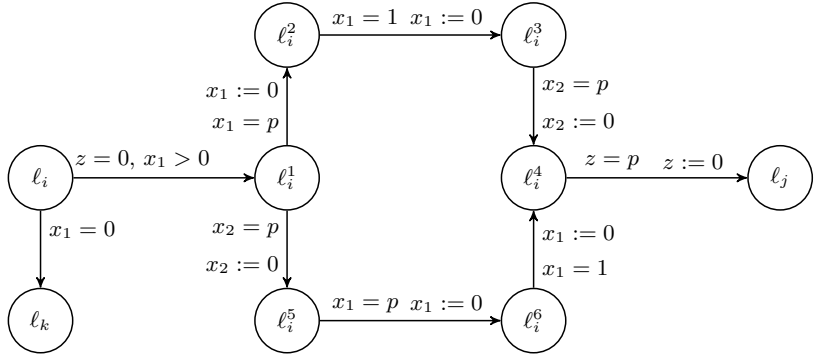
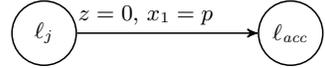
\begin{figure}[t!]
\newcommand{\FigScale}{0.95}
\centering
\subfloat[Increment \ \ {\tt $i$: $c_1$++; goto $j$}]{\label{fig:encoding:inc}
\scalebox{\FigScale}{
\begin{tikzpicture}[->,>=stealth',auto,node distance=5.5cm, semithick, xscale=3.45, yscale=2.0]
  \node[state,align=center,minimum size=.9cm] (li) at (0,1) {$\ell_i$};
  \node[state,align=center,minimum size=.9cm] (li1) at (1,1) {$\ell^1_i$};
  \node[state,align=center,minimum size=.9cm] (li2) at (1,2) {$\ell^2_i$};
  \node[state,align=center,minimum size=.9cm] (li3) at (2,2) {$\ell^3_i$};
  \node[state,align=center,minimum size=.9cm] (li4) at (2,1) {$\ell^4_i$};
  \node[state,align=center,minimum size=.9cm] (li5) at (1,0) {$\ell^5_i$};
  \node[state,align=center,minimum size=.9cm] (li6) at (2,0) {$\ell^6_i$};
  \node[state,align=center,minimum size=.9cm] (lj) at (3,1) {$\ell_j$};
  \path[->] (li) edge [] node[pos=0.25] {$z=1$}
                         node[pos=0.7] {$z:=0$} (li1);
  \path[->] (li1) edge [] node[left,pos=0.25] {$x_1=p$}
                         node[left,pos=0.7] {$x_1:=0$} (li2);
  \path[->] (li2) edge [] node[pos=0.25] {$x_2=p$}
                         node[pos=0.7] {$x_2:=0$} (li3);
  \path[->] (li3) edge [] node[right,pos=0.25] {$x_2=1$}
                         node[right,pos=0.7] {$x_2:=0$} (li4);
  \path[->] (li4) edge [] node[pos=0.25] {$z=p$}
                         node[pos=0.7] {$z:=0$} (lj);
  \path[->] (li1) edge [] node[left,pos=0.25] {$x_2=p$}
                         node[left,pos=0.7] {$x_2:=0$} (li5);
  \path[->] (li5) edge [] node[pos=0.25] {$x_2=1$}
                         node[pos=0.7] {$x_2:=0$} (li6);
  \path[->] (li6) edge [] node[right,pos=0.25] {$x_1=p$}
                         node[right,pos=0.7] {$x_1:=0$} (li4);
\end{tikzpicture}
}
}

\subfloat[Test and decrement \ \ 
{\tt $i$: if $c_1$=$0$ then goto $k$ else ($c_1$--; goto $j$)}
]{\label{fig:encoding:dec}
\scalebox{\FigScale}{
\begin{tikzpicture}[->,>=stealth',auto,node distance=5.5cm, semithick, xscale=3.45, yscale=2.0]
  \node[state,align=center,minimum size=.9cm] (li) at (0,1) {$\ell_i$};
  \node[state,align=center,minimum size=.9cm] (li1) at (1,1) {$\ell^1_i$};
  \node[state,align=center,minimum size=.9cm] (li2) at (1,2) {$\ell^2_i$};
  \node[state,align=center,minimum size=.9cm] (li3) at (2,2) {$\ell^3_i$};
  \node[state,align=center,minimum size=.9cm] (li4) at (2,1) {$\ell^4_i$};
  \node[state,align=center,minimum size=.9cm] (li5) at (1,0) {$\ell^5_i$};
  \node[state,align=center,minimum size=.9cm] (li6) at (2,0) {$\ell^6_i$};
  \node[state,align=center,minimum size=.9cm] (lj) at (3,1) {$\ell_j$};
  \node[state,align=center,minimum size=.9cm] (lk) at (0,0) {$\ell_k$};
  \path[->] (li) edge [] node[pos=0.25] {$x_1=0$}
                         node[pos=0.7] {} (lk);
  \path[->] (li) edge [] node[pos=0.38] {$z=0$, $x_1>0$}
                         node[pos=0.7] {} (li1);
  \path[->] (li1) edge [] node[left,pos=0.25] {$x_1=p$}
                         node[left,pos=0.7] {$x_1:=0$} (li2);
  \path[->] (li2) edge [] node[pos=0.25] {$x_1=1$}
                         node[pos=0.7] {$x_1:=0$} (li3);
  \path[->] (li3) edge [] node[right,pos=0.25] {$x_2=p$}
                         node[right,pos=0.7] {$x_2:=0$} (li4);
  \path[->] (li4) edge [] node[pos=0.25] {$z=p$}
                         node[pos=0.7] {$z:=0$} (lj);
  \path[->] (li1) edge [] node[left,pos=0.25] {$x_2=p$}
                         node[left,pos=0.7] {$x_2:=0$} (li5);
  \path[->] (li5) edge [] node[pos=0.25] {$x_1=p$}
                         node[pos=0.7] {$x_1:=0$} (li6);
  \path[->] (li6) edge [] node[right,pos=0.25] {$x_1=1$}
                         node[right,pos=0.7] {$x_1:=0$} (li4);
\end{tikzpicture}
}
}

\subfloat[For safety, add this for every instruction {\tt $i$: $c_1$++; goto $j$}]{
\label{fig:encoding:incsafety}
\scalebox{\FigScale}{
\begin{tikzpicture}[->,>=stealth',auto,node distance=3.45cm, semithick]
  \node[state,align=center,minimum size=.9cm] (li) {$\ell_{j}$};
  \node[] (left) [left of=li] {};
%  \node[state,align=center,minimum size=.9cm] (l1) [right of=li] {$\ell_i^7$};
  \node[state,align=center,minimum size=.9cm] (lacc) [right of=li] {$\ell_{\mathit{acc}}$};
\node[] (right) [right of=lacc] {};
  \path[->] (li) edge [] node[pos=0.38] {$z=0$, $x_1=p$} (lacc);
  %\path[->] (l1) edge [] node[pos=0.3] {$z=0$} (lacc);
\end{tikzpicture}
}
}
\caption{Encoding of Minsky Machine as PTA with a single parameter $p$}
\label{fig:encoding}
\vspace{-3mm}
\end{figure}

\begin{lemma}\label{lem:minsky}
Let $M$ be a Minsky machine.
Let $A$ be the PTA built according to the rules in Figures~\ref{fig:encoding:inc} and \ref{fig:encoding:dec} (without the 
transitions for safety) and where $\ell_1$ is the initial 
location and $\ell_n$ is the only accepting location.
The Minsky machine $M$ halts iff there is a parameter
valuation $\paramval$ such that $L_\paramval(A) \not= \emptyset$.
\end{lemma}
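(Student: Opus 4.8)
The plan is to read the gadgets as a faithful simulation in which, at every \emph{instruction location} $\ell_i$ entered with $z=0$, the clock values $\clockval(x_1)$ and $\clockval(x_2)$ are exactly the integer counter contents $c_1,c_2$, while the parameter value $\paramval(p)$ serves as a strict upper bound on the representable counters. The first step is to establish this invariant by tracing each gadget. A short computation shows that starting from $\ell_i$ with $z=0$, $x_1=a_1$, $x_2=a_2$ (with $a_1,a_2<\paramval(p)$), the increment gadget admits exactly two feasible traversals: the path through $\ell_i^2,\ell_i^3$ is enabled precisely when $a_1\ge a_2$, the path through $\ell_i^5,\ell_i^6$ precisely when $a_2>a_1$, and in both cases the run reaches $\ell_j$ with $z=0$, $x_1=a_1+1$, $x_2=a_2$. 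The decrement gadget is analysed the same way, yielding $x_1=a_1-1$, $x_2=a_2$ at $\ell_j$ whenever $a_1>0$, while the edge from $\ell_i$ to $\ell_k$ guarded by $x_1=0$ realises the zero test. This is the computational heart of the argument and I would carry it out first.

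For the forward direction I would assume $M$ halts and let its unique (deterministic) computation reach instruction $n$ after finitely many steps. Since the computation is finite, the quantity $\max(c_1,c_2)$ over all visited configurations is bounded by some $K$; choosing $\paramval(p)=K+1$ makes every required delay of the form $\paramval(p)-a_r$ nonnegative. One then builds an accepting run of $T_\paramval(A)$ by concatenating, for each executed instruction, the gadget traversal dictated by the invariant above (the feasible path according to the ordering of $a_1,a_2$, and the zero- or decrement-branch according to whether $c_1=0$). The invariant guarantees that this run eventually reaches $\ell_n\in F$, so $L_\paramval(A)\neq\emptyset$.

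For the backward direction I would assume that some $\paramval$ admits an accepting run and prove, by induction on the number of instruction locations visited, that the run is \emph{forced} to mirror $M$. The crucial point is rigidity: every guard is an equality $x=\text{const}$ (or the complementary pair $x_1=0$ / $x_1>0$ at a test), so once a path through a gadget is fixed all delays are uniquely determined, and since clocks only increase an infeasible delay kills the run. In particular I would show that \emph{overshooting} a guard — for instance entering the $\ell_i^5$ branch of the increment gadget when $a_1>a_2$ — forces $x_1>\paramval(p)$ and renders the later guard $x_1=\paramval(p)$ unsatisfiable, so the only surviving traversals are the intended ones. Combined with the integrality of the clock values at instruction boundaries (preserved inductively from the initial valuation $\clockval_0$, which keeps the ``exactly one feasible path'' dichotomy valid over $\N$), this shows that the sequence of instruction locations visited with $z=0$ coincides with the unique computation of $M$; reaching $\ell_n$ therefore means $M$ halts.

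The main obstacle is this backward direction, namely ruling out spurious accepting runs. It requires the careful per-gadget case analysis establishing that wrong path choices and over-delays always lead to a dead end, together with the inductive integrality claim on which the dichotomy depends. By contrast, the forward direction is a routine simulation once $\paramval(p)$ is taken larger than all counter values occurring along the halting computation.
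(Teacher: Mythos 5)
Your proposal is correct and follows essentially the same route as the paper's own proof: the same per-gadget trace analysis with the case split on whether $x_1 \ge x_2$ (upper branch) or $x_1 < x_2$ (lower branch), the same observation that the equality guards rigidly determine all delays so that wrong branch choices or over-delays get stuck (e.g.\ forcing $x_1 > \paramval(p)$ before a guard $x_1 = p$), and the same forward/backward structure with $\paramval(p)$ chosen strictly above the maximal counter value along the halting computation. Your explicit inductive integrality claim at instruction boundaries is a detail the paper leaves implicit, but it is not a different argument.
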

\begin{proof}[Sketch]
We only sketch a~part of the proof to show the basic idea.
We argue that from the configuration $(\ell_i,\clockval)$
where $\clockval(z)=0$ and where $\clockval(x_1)$ 
and $\clockval(x_2)$ represent the counter values, there is a unique 
way to move from $\ell_i$ to $\ell_j$ (or possibly also to $\ell_k$ in the case of 
the test and decrement instruction) where again $\clockval(z)=0$ and the counter
values are updated accordingly. As there are no invariants in the automaton,
we can always delay long enough so that we get stuck in a given location,
but this behaviour will not influence the language emptiness problem we
are interested in.

Consider the automaton for the increment instruction from
Figure~\ref{fig:encoding:inc} and assume we are in a configuration 
$(\ell_i,\clockval)$ where $\clockval(z)=0$, $\clockval(x_1)=v_1$ 
and $\clockval(x_2)=v_2$.
First note that if $v_1\geq p$ then there is no execution ending in $\ell_k$
due to the forced delay of one time unit on the transition from
$\ell_i$ to $\ell_i^1$ and the guard $x_1=p$ tested in both the upper
and lower branch in the automaton. Assume thus that $v_1 < p$.
If $v_1 \geq v_2$ then we can perform the following execution
%move along the locations
%$\ell_i^1$, $\ell_i^2$, $\ell_i^3$ and $\ell_i^4$ to the location $\ell_k$
with uniquely determined time delays:
%\begin{align*}
%(\ell_i, [x_1 \mapsto v_1, x_2 \mapsto v_2, z \mapsto 0])
%&\goes{1}\\
% (\ell_i^1, [x_1 \mapsto v_1+1, x_2 \mapsto v_2+1, z \mapsto 0])
%&\goes{p-v_1-1}\\
% (\ell_i^2, [x_1 \mapsto 0, x_2 \mapsto p - v_1 + v_2, z \mapsto p-v_1-1])
%&\goes{v_1-v_2}\\
% (\ell_i^3, [x_1 \mapsto v_1-v_2, x_2 \mapsto 0, z \mapsto p-v_2-1])
%&\goes{1}\\
% (\ell_i^4, [x_1 \mapsto v_1-v_2+1, x_2 \mapsto 0, z \mapsto p-v_2])
%&\goes{v_2}\\
% (\ell_j, [x_1 \mapsto v_1+1, x_2 \mapsto v_2, z \mapsto 0])
%&.
%\end{align*}
$(\ell_i, [x_1 \mapsto v_1, x_2 \mapsto v_2, z \mapsto 0])
\goes{1}
 (\ell_i^1, [x_1 \mapsto v_1+1, x_2 \mapsto v_2+1, z \mapsto 0])
\goes{p-v_1-1}
 (\ell_i^2, [x_1 \mapsto 0, x_2 \mapsto p - v_1 + v_2, z \mapsto p-v_1-1])
\goes{v_1-v_2}
 (\ell_i^3, [x_1 \mapsto v_1-v_2, x_2 \mapsto 0, z \mapsto p-v_2-1])
\goes{1}
 (\ell_i^4, [x_1 \mapsto v_1-v_2+1, x_2 \mapsto 0, z \mapsto p-v_2])
\goes{v_2}
 (\ell_j, [x_1 \mapsto v_1+1, x_2 \mapsto v_2, z \mapsto 0])$.
%$$ (\ell_i, [x_1 \mapsto v_1, x_2 \mapsto v_2, z \mapsto 0]) $$
%$$\downarrow 1$$
%$$ (\ell_i^1, [x_1 \mapsto v_1+1, x_2 \mapsto v_2+1, z \mapsto 0]) $$
%$$\downarrow p-v_1-1$$
%$$ (\ell_i^2, [x_1 \mapsto 0, x_2 \mapsto p - v_1 + v_2, z \mapsto p-v_1-1]) $$
%$$\downarrow v_1-v_2$$
%$$ (\ell_i^3, [x_1 \mapsto v_1-v_2, x_2 \mapsto 0, z \mapsto p-v_2-1]) $$
%$$\downarrow 1$$
%$$ (\ell_i^4, [x_1 \mapsto v_1-v_2+1, x_2 \mapsto 0, z \mapsto p-v_2]) $$
%$$\downarrow v_2$$
%$$ (\ell_j, [x_1 \mapsto v_1+1, x_2 \mapsto v_2, z \mapsto 0]) \ .$$
In this case where $v_1 \geq v_2$, executing the lower branch of the automaton
will result in getting stuck in the location $\ell_i^6$ as here
necessarily $\clockval(x_1)>p$.
Clearly, there is a~unique way of getting to $\ell_j$ in which the
clock valuation of $x_1$ was incremented by one, hence faithfully simulating
the increment instruction of the Minsky machine.
The other cases and instructions are dealt with similarly, see Appendix~\ref{app:undec}. 
\qed
\end{proof}

\begin{lemma}
Let $M$ be a Minsky machine.
Let $A$ be the PTA built according to the rules in Figures~\ref{fig:encoding:inc}, \ref{fig:encoding:dec} and \ref{fig:encoding:incsafety} (including the transitions for safety) and where $\ell_1$ is the initial 
location and $\ell_{\mathit{acc}}$ is the only accepting location.
The Minsky machine $M$ is bounded iff there is a parameter
valuation $\paramval$ such that $L_\paramval(A) = \emptyset$.
\end{lemma}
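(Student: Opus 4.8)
The proof will rest on the same faithful-simulation invariant used for Lemma~\ref{lem:minsky}, now read through the lens of the parameter $p$ as a bound on the counters. Since $\ell_{\mathit{acc}}$ is the only accepting location, I would first observe that for a given valuation $L_\paramval(A) = \emptyset$ holds exactly when $\ell_{\mathit{acc}}$ is unreachable in $T_\paramval(A)$. The only edges into $\ell_{\mathit{acc}}$ are the safety transitions of Figure~\ref{fig:encoding:incsafety} (and their symmetric counterparts for $c_2$), each leaving a location $\ell_j$ that is the target of an increment instruction and guarded by $z = 0 \wedge x_1 = p$ (resp.\ $x_2 = p$). Because arriving at such an $\ell_j$ from the increment gadget forces $z = 0$ and leaves $x_1$ equal to the freshly incremented counter value, the key reformulation to prove is: for the parameter value $p$, the location $\ell_{\mathit{acc}}$ is reachable if and only if the run of $M$ reaches a configuration in which some counter attains the value $p$ immediately after an increment, equivalently in which some counter ever reaches the value $p$.

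The central step is to establish that the bound behaves as intended: as long as both counters stay strictly below $p$, the invariant of Lemma~\ref{lem:minsky} applies verbatim, so every configuration $(\ell_j,\clockval)$ with $\clockval(z)=0$ that is reachable in $T_\paramval(A)$ is reached by the unique simulating path and satisfies $\clockval(x_1)=v_1$, $\clockval(x_2)=v_2$ for the true counter values of $M$. I would then isolate the first moment in the computation of $M$ at which either counter is incremented to exactly $p$. Up to and including that step both counters have stayed below $p$, so the simulation is faithful; the incrementing step lands in an increment-target $\ell_j$ with $z=0$ and with the incremented clock equal to $p$, so its safety transition fires and reaches $\ell_{\mathit{acc}}$. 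Conversely, the uniqueness of the simulating path rules out spurious runs: any run reaching an increment-target $\ell_j$ with $z=0$ carries the genuine counter values in $x_1,x_2$, so a guard $x_1=p$ (or $x_2=p$) can be satisfied only when a counter really equals $p$.

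For the forward direction I assume $M$ is bounded, fix a witness $K$ with $v_1+v_2\le K$ on every reachable configuration, and take the valuation assigning $p = K+1$. Every reachable configuration, in particular every one reached right after an increment, then has both counters at most $K < K+1 = p$, so no safety guard is ever satisfiable along the (unique) simulating runs; hence $\ell_{\mathit{acc}}$ is unreachable and $L_\paramval(A)=\emptyset$. For the other direction I argue by contraposition. If $M$ is unbounded, then for an arbitrary parameter value $p$ the sum $v_1+v_2$ reaches at least $2p$ at some reachable configuration, so one counter attains a value $\ge p$; since the counters start at $0$ and change by $\pm 1$, that counter must at some earlier step be incremented from $p-1$ to exactly $p$, triggering a safety transition as above. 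Thus $\ell_{\mathit{acc}}$ is reachable for every parameter value, so no $\paramval$ makes the language empty, which is the required contrapositive.

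The main obstacle I expect is not the arithmetic but pinning down the faithfulness window precisely: I must check that the invariant of Lemma~\ref{lem:minsky} survives up to and including the single step that first pushes a counter to $p$, so that the safety transition observes the true counter value, while simultaneously using uniqueness of the simulating path to exclude, in the bounded case, any alternative ``stuck'' or off-track run that might reach an increment-target with $x_1 = p$ although the counters are genuinely below $p$. Handling the symmetric gadget for $c_2$ and the boundary case in which the increment to $p$ is performed with a zero-length delay inside the gadget (as already occurs in the $v_1 = p-1$ computation) are routine details that the appendix should absorb.
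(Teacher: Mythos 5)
Your proposal is correct and takes essentially the same route as the paper: for bounded $M$ with bound $K$ you pick $p$ larger than any counter value so that no safety guard $z=0 \wedge x_1 = p$ (or $x_2 = p$) is ever satisfiable along the faithful simulation, and for unbounded $M$ you argue that for every $p$ some counter is eventually incremented from $p-1$ to exactly $p$, landing in the increment-target $\ell_j$ with $z=0$ and firing the transition into $\ell_{\mathit{acc}}$. Your extra care about the faithfulness window and the first hitting of $p$ only makes explicit what the paper leaves implicit via its appeal to the simulation invariant of Lemma~\ref{lem:minsky}.
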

\begin{proof}
If the computation of the Minsky machine is unbounded then clearly,
for any parameter value of $p$, the Minsky machine 
will eventually try to make one of the counters larger or equal than $p$ 
(using the
increment instruction). Necessarily, we will then have $\clockval(x_1) = p$ 
or $\clockval(x_2) = p$ in the location 
$\ell_j$ where we end after performing the increment instruction $i$, 
implying that we can reach the accepting 
location $\ell_{\mathit{acc}}$ due to the transition added
in Figure~\ref{fig:encoding:incsafety} and hence the language is nonempty.
On the other hand, if the parameter $p$ is large enough and the computation
bounded (note that the boundedness condition $\exists K.\ v_1 + v_2 \leq K$
is equivalent to $\exists K.\ \max\{v_1,v_2\} \leq K$), we will not be
able to enter the accepting location $\ell_{\mathit{acc}}$
and the language is empty. \qed
\end{proof}

We now conclude with the main theorem of this section, tightening
the previously known undecidability result that used six parameters and
three parametric clocks~\cite{AHV:STOC:93}.
The theorem is valid for both the continuous-time and 
the discrete-time semantics due to the
exact guards in all transitions of the constructed PTA that allow 
to take transitions only after integer delays.

\begin{theorem}
The reachability and safety problems are undecidable for PTA with one integer
parameter, three parametric clocks and no further nonparametric clocks in the
continuous-time as well as the discrete-time semantics.
\end{theorem}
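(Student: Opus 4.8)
The plan is to obtain the theorem as a direct consequence of the two lemmas above, combined with the undecidability of the halting and boundedness problems for two-counter Minsky machines. Concretely, Lemma~\ref{lem:minsky} exhibits a computable map $M \mapsto A$ such that $M$ halts iff some parameter valuation $\paramval$ makes $L_\paramval(A)$ nonempty; since the halting problem is undecidable~\cite{Minsky:book}, this is a valid many-one reduction witnessing undecidability of the reachability problem. Symmetrically, the second lemma gives a computable map (the same construction augmented with the gadget of Figure~\ref{fig:encoding:incsafety}) such that $M$ is bounded iff some $\paramval$ makes $L_\paramval(A)$ empty; as boundedness is undecidable~\cite{KCH:ACCS:10}, this reduces an undecidable problem to the safety problem. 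Hence both problems are undecidable, which already improves on~\cite{AHV:STOC:93}.

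Next I would verify that the produced instances meet the stated resource constraints simply by inspecting Figures~\ref{fig:encoding:inc}, \ref{fig:encoding:dec} and \ref{fig:encoding:incsafety}: the only parameter occurring anywhere is $p$, and the only clocks are $x_1$, $x_2$ and $\newclock$. Each of these three clocks appears in at least one guard of the form $(\text{clock}) = p$, so all three are parametric and, consequently, the construction introduces no nonparametric clocks. This pins down exactly the bound claimed, namely one integer parameter, three parametric clocks and no further nonparametric clocks.

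The remaining, and only genuinely delicate, point is to transfer the reductions from the continuous-time semantics, in which the lemmas are phrased since $L_\paramval(A)$ ranges over delays $d \in \R$, to the discrete-time semantics. For this I would show that along any run of $A$ every delay taken is integer-valued, so that the continuous- and discrete-time accepting runs coincide and $L_\paramval(A) \neq \emptyset$ holds under one semantics iff it holds under the other. The argument is an induction maintaining the invariant that at every configuration reached by a run all clock values are integers: this holds initially since $\clockval_0$ is all zeros, every reset sets a clock to the integer $0$, and every transition is guarded either by an equality $(\text{clock}) = c$ with $c \in \{0,1,p\} \subseteq \N$, or by $(\text{clock}) > 0$ conjoined with $\newclock = 0$ which forces a zero delay; firing such a guard from an integer valuation forces the intervening delay to be a difference of integers, hence integral, and preserves integrality of all clocks. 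I expect the main obstacle to lie precisely here, in ruling out \emph{spurious} continuous-time runs: one must check that no accepting computation exploits a non-integer delay to bypass an exact guard, e.g.\ by letting a clock overshoot an integer to be caught by a later test. The exact-equality structure of all guards, the absence of invariants (so delays are never truncated), and the uniqueness of the forced integer-difference delays established in the sketch of Lemma~\ref{lem:minsky} are what foreclose this possibility; making that uniqueness fully rigorous across all gadgets is the part requiring care.
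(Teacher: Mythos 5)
Your proposal is correct and follows essentially the same route as the paper: the theorem is deduced directly from Lemma~\ref{lem:minsky} and the subsequent boundedness lemma via the undecidability of halting and boundedness for two-counter Minsky machines, and the continuous/discrete transfer is justified exactly as the paper does it, by noting that every transition of the construction carries an exact equality guard (against $0$, $1$, or $p$), so transitions can only fire after integer delays. Your induction on the integrality of clock valuations simply spells out in detail what the paper states in the one-sentence remark preceding the theorem.
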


\section{Decidability for One Parametric Clock} \label{sec:decidability}

In this section, we show that both the reachability and safety problems for PTA
with a~single parametric clock are decidable. Our general strategy is similar
to that of \cite{AHV:STOC:93}, i.e.~reducing the original PTA (which has
continuous-time semantics in our case) into a~so-called
\emph{parametric 0/1-timed automaton} with just a~single clock. It is shown in
\cite{AHV:STOC:93} that the set of parameter valuations that ensure language
nonemptiness of a~given parametric 0/1-timed automaton with single clock is
effectively computable. Moreover, in \cite{BO:MFCS:14} the authors show that
the reachability problem for parametric 0/1-timed automata is polynomial-time
reducible to the
halting problem of parametric bounded one-counter machines, which is in NP. As
the parametric 0/1-timed automaton is going to be exponential in the size of
the original PTA, this makes the reachability problem for PTA with a~single
parametric clock belong to the NEXPTIME complexity class.

A~0/1-timed automaton is a~timed automaton with discrete time, in which all the
delays are explicitly encoded via two kinds of delay transitions: 0-transitions
and 1-transitions. Formally, we enrich the syntax of a~timed automaton with
two transition relations $\goes{0}$, $\mathord{\goes{1}} \subseteq 
L \times L$ and modify the semantics so that $(\ell,\nu) \goes{0} (\ell',\nu)$
iff $\ell \goes{0} \ell'$ and $(\ell,\nu) \goes{1} (\ell',\nu+1)$ iff
$\ell \goes{1} \ell'$; other delays in the timed transition system are
no longer possible.

Note that this treatment of $\goes{0}$ and $\goes{1}$ as special transitions
differs slightly from the original definition of~\cite{AHV:STOC:93}, in which
a~0/1 label is given to every transition of the 0/1-timed automaton. This change
is only cosmetic; the ability to distinguish between 0/1-transitions and action
transitions will be useful in later proofs.

\paragraph{Corner-Point Abstraction.}
As we are concerned with continuous time, our reduction to 0/1-timed
automata is more convoluted than that of~\cite{AHV:STOC:93}, in which
the nonparametric clocks were eliminated by moving their integer values
into locations. In our setting, using region abstraction to eliminate 
nonparametric clocks will not allow us to correctly
identify the 0/1 delays. We thus choose to use 
\emph{corner-point abstraction}~\cite{BFHLPRV:HSCC:01} that 
is finer than the region-based one.
In this abstraction, each region is associated with a~set
of its corner points. Note that the original definition only deals with
timed automata that are bounded, while we want to be more general here.
For this reason, we extend the original definition with
extra corner points for unbounded regions.

We first define the region equivalence~\cite{ad:tcs94}.
Let $M \in \N$ be the largest constant appearing in the constraints of 
a~given timed automaton.
Note that in the original definition the largest constant is considered for
each clock independently. For the sake of readability, we consider $M$ to be
a~common upper bound for each clock.
Let $\clockval,\clockval'$ be clock valuations.
% and for each clock $x \in \clocks$, let $c_x \in \N$ be the largest constant appearing in the constraints containing $x$ of a~given timed automaton.
Let further $\fr(t)$ be the fractional part of $t$ and $\lfloor t \rfloor$ be the integral part of $t$.  
We define an equivalence relation $\equiv$ on clock valuations
by $\clockval \equiv \clockval'$  if and only if the following three conditions are satisfied:
\begin{itemize}
\item for all $x \in \clocks$ either $\clockval(x)\geq M$ and $\clockval'(x)\geq M$ or $\lfloor\clockval(x)\rfloor=\lfloor\clockval'(x)\rfloor$;
\item for all $x, y \in \clocks$ such that $\clockval(x)\leq M$ and $\clockval(y)\leq M$, $\fr(\clockval(x)) \leq \fr(\clockval(y))$ if and only if $\fr(\clockval'(x)) \leq \fr(\clockval'(y))$;
\item for all $x \in \clocks$ such that $\clockval(x)\leq M$, $\fr(\clockval(x))=0$ if and only if $\fr(\clockval'(x))=0$. 
\end{itemize}
%A~\textit{region} is an equivalence class of this region equivalence.
We define a \textit{region} as an~equivalence class of clock valuations induced by $\equiv$.
A~region $r'$ is a~time successor of a~region $r$ if for all $\clockval
\in r$ there exists $d \in \Rpos$ such that $\clockval+d \in r'$ and for all
$d'$, $0
\leq d' \leq d$, we have $\clockval +d' \in r \cup r'$. As the time successor is
unique if it exists, we use $\suc(r)$ to denote the time successor of $r$.
Moreover, if no time successor of $r$ exists, we let $\suc(r) = r$.

An \textit{$(M^{+1})$-corner point} $\alpha: \clocks \goes{} \N\cap[0,M+1]$ is a
function which assigns an integer value from the interval $[0,M+1]$ to each clock. 
We define the successor of the $M^{+1}$-corner point $\alpha$,
denoted by $\suc(\alpha)$, as follows:
\[
\text{ for each }x \in \clocks,\ \ succ(\alpha)(x)= 
\begin{cases} 
\alpha(x) + 1  &  \alpha(x) \leq M \\
M+1  & \text{otherwise .} 
\end{cases} 
\]
For $R\subseteq \clocks$, we define the reset of the 
corner point $\alpha$, denoted by $\alpha[R]$, as follows:
\[
\text{ for each } x \in \clocks,\ \ \alpha[R](x)= 
\begin{cases} 
\alpha(x)  &  x \not \in R \\
0  & x \in R \ .
\end{cases} 
\]
We say $\alpha$ is a~corner point of a~region $r$ 
whenever $\alpha$ is in the topological closure of~$r$. 
The construction of the corner-point abstraction is illustrated in
Figure~\ref{fig:cpa}. Notice the corner points in unbounded regions.

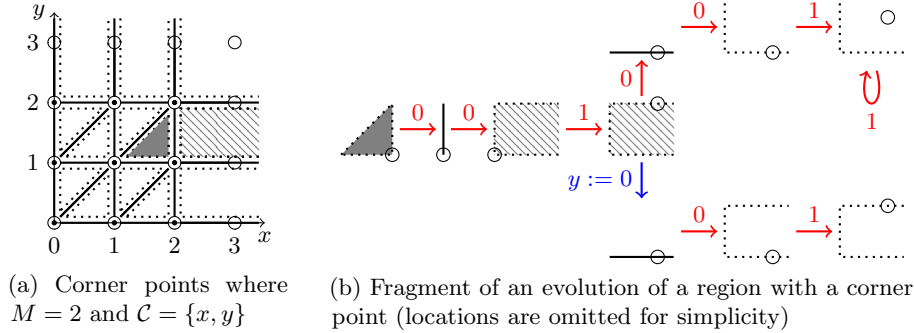
\begin{figure}[t]
\newcommand{\CPRadius}{3}
\centering
\begin{minipage}{0.35\textwidth}
\subfloat[Corner points where $M=2$ and $\clocks=\{x,y\}$]{\label{fig:cornerPoints}
\begin{tikzpicture}[scale=0.8]
% Draw axes

\draw [<->] (0,3.5) node (yaxis) [left] {$y$}
        |- (3.5,0) node (xaxis) [below] {$x$};
% Draw regions
\foreach \x in {0,1}
	\foreach \y in {0,1}
	    \draw [-, dotted, thick] (\x+0.2,\y+0.1) -- (\x+0.9,\y+0.1) -- (\x+0.9,\y+0.8) -- (\x+0.2,\y+0.1);

\foreach \x in {0,1}
	\foreach \y in {0,1}
	    \draw [-, dotted, thick] (\x+0.1,\y+0.2) -- (\x+0.1,\y+0.9) -- (\x+0.8,\y+0.9) -- (\x+0.1,\y+0.2);

\foreach \x in {0,1,2}
	\foreach \y in {0,1,2}
		\filldraw (\x,\y) circle (1pt);			    

\foreach \x in {0,1,2} 
	\foreach \y in {0,1,2} {
	    \draw [-, solid, thick] (\x+0.1,\y) -- (\x+0.9,\y+0);
	    \draw [-, solid, thick] (\x,\y+0.1) -- (\x,\y+0.9);
	}
\foreach \x in {0,1} 
	\foreach \y in {0,1} {
	    \draw [-, solid, thick] (\x+0.1,\y+0.1) -- (\x+0.9,\y+0.9);
	    \draw [-, solid, thick] (\x+0.1,\y+0.1) -- (\x+0.9,\y+0.9);
	}
\foreach \x in {0,1} 
    \draw [-, dotted, thick] (\x+0.1,3.4) -- (\x+0.1,2.1) -- (\x+0.9,2.1) -- (\x+0.9,3.4);
\foreach \y in {0,1} 
    \draw [-, dotted, thick] (3.4,\y+0.1) -- (2.1,\y+0.1) -- (2.1,\y+0.9) -- (3.4,\y+0.9);
\draw [-, dotted, thick] (2.1,3.4) -- (2.1,2.1) -- (3.4,2.1);			

\foreach \x in {0,1,2} 
    \draw [-, solid, thick] (\x,3.4) -- (\x,2.1);
\foreach \y in {0,1,2} 
    \draw [-, solid, thick] (3.4,\y) -- (2.1,\y);

% Draw numbers
\foreach \x in {0,1,2,3}
    \draw[-] (\x ,-3pt) -- (\x ,-3pt) node[anchor=north] {$\x$};
\foreach \y in {1,2,3}
    \draw[-] (-3pt,\y ) -- (-3pt,\y ) node[anchor=east] {$\y$};
    
\foreach \x in {0,1,2,3} 
	\foreach \y in {0,1,2,3} {
    \draw (\x,\y) circle (\CPRadius pt);
	}
	
	%mark regions
    \draw [-, dotted, thick, fill=gray] (1.2,1.1) -- (1.9,1.1) -- (1.9,1.8);
    \draw [-, dotted, thick, pattern=north west lines, pattern color=gray] (3.4,1.1) -- (2.1,1.1) -- (2.1,1.9) -- (3.4,1.9);
   
\end{tikzpicture}
}
\end{minipage}%
\begin{minipage}{0.65\textwidth}
\subfloat[Fragment of an evolution of a region with a corner point 
%where reset can be applied in marked unbounded region 
(locations are omitted for simplicity)]{ \label{fig:cornerPointsEvolution}
\begin{tikzpicture}[scale=0.85]
\newcommand{\RegSize}{0.8}
\newcommand{\URegSize}{01}
\newcommand{\ArrowSize}{0.8}
\newcommand{\ArrowYOffset}{\RegSize/2}

	%offset
	%\path[->, white] (0+\OffsetBottom,0) edge [] (0+\OffsetBottom,-0.4);

 	%middle line    	
	\draw [-, dotted, thick, fill=gray] (0,0) -- (\RegSize,\RegSize) -- (\RegSize,0)
	--cycle;
    \draw (\RegSize,0) circle (\CPRadius pt);
    
    \path[->, red, thick] (\RegSize,\ArrowYOffset) edge [shorten <= 0.1cm, shorten >= 0.1cm] node[above=0pt] {0} (\RegSize+\ArrowSize,\ArrowYOffset);
    
	\draw [-, solid, thick] (\RegSize+\ArrowSize,0) -- (\RegSize+\ArrowSize,\RegSize);
    \draw (\RegSize+\ArrowSize,0) circle (\CPRadius pt);

    \path[->, red, thick] (\RegSize+\ArrowSize,\ArrowYOffset) edge [ shorten <= 0.1cm, shorten >= 0.1cm] node[above=0pt] {0} (\RegSize+2*\ArrowSize,\ArrowYOffset);
        	
	\draw [-, dotted, thick, pattern=north west lines, pattern color=gray] (\URegSize+\RegSize+2*\ArrowSize,0) -- (\RegSize+2*\ArrowSize,0) -- (\RegSize+2*\ArrowSize,\RegSize) -- (\URegSize+\RegSize+2*\ArrowSize,\RegSize);
    \draw (\RegSize+2*\ArrowSize,0) circle (\CPRadius pt);

    \path[->, red, thick] (\URegSize+\RegSize+2*\ArrowSize,\ArrowYOffset) edge [ shorten <= 0.1cm, shorten >= 0.1cm] node[above=0pt] {1} (\URegSize+\RegSize+3*\ArrowSize,\ArrowYOffset);
        
	\draw [-, dotted, thick,  pattern=north west lines, pattern color=gray] (2*\URegSize+\RegSize+3*\ArrowSize,0) -- (\URegSize+\RegSize+3*\ArrowSize,0) -- (\URegSize+\RegSize+3*\ArrowSize,\RegSize) -- (2*\URegSize+\RegSize+3*\ArrowSize,\RegSize);
    \draw (2*\URegSize+\RegSize+3*\ArrowSize-0.25,\RegSize) circle (\CPRadius pt);

 	%lower line  
    \path[->, blue, thick] (1.5*\URegSize+\RegSize+3*\ArrowSize,0) edge [ shorten <= 0.1cm, shorten >= 0.1cm] node[left=0pt] {$y:=0$} (1.5*\URegSize+\RegSize+3*\ArrowSize,-\ArrowSize);
    
    \draw [-, solid, thick] (2*\URegSize+\RegSize+3*\ArrowSize,-\RegSize-\ArrowSize) -- (1*\URegSize+\RegSize+3*\ArrowSize,-\RegSize-\ArrowSize) ;
    \draw (2*\URegSize+\RegSize+3*\ArrowSize-0.25,-\RegSize-\ArrowSize) circle (\CPRadius pt);
    
    \path[->, red, thick] (2*\URegSize+\RegSize+3*\ArrowSize,-0.5*\RegSize-\ArrowSize) edge [ shorten <= 0.1cm, shorten >= 0.1cm] node[above=0pt] {0} (2*\URegSize+\RegSize+4*\ArrowSize,-0.5*\RegSize-\ArrowSize);
        
    \draw [-, dotted, thick] (3*\URegSize+\RegSize+4*\ArrowSize,-\RegSize-\ArrowSize) -- (2*\URegSize+\RegSize+4*\ArrowSize,-\RegSize-\ArrowSize) -- (2*\URegSize+\RegSize+4*\ArrowSize,-\ArrowSize) -- (3*\URegSize+\RegSize+4*\ArrowSize,-\ArrowSize);
    \draw (3*\URegSize+\RegSize+4*\ArrowSize-0.25,-\RegSize-\ArrowSize) circle (\CPRadius pt);

    \path[->, red, thick] (3*\URegSize+\RegSize+4*\ArrowSize,-0.5*\RegSize-\ArrowSize) edge [ shorten <= 0.1cm, shorten >= 0.1cm] node[above=0pt] {1} (3*\URegSize+\RegSize+5*\ArrowSize,-0.5*\RegSize-\ArrowSize);
        
	\draw [-, dotted, thick] (4*\URegSize+\RegSize+5*\ArrowSize,-\ArrowSize) -- (3*\URegSize+\RegSize+5*\ArrowSize,-\ArrowSize) -- (3*\URegSize+\RegSize+5*\ArrowSize,-\RegSize-\ArrowSize) -- (4*\URegSize+\RegSize+5*\ArrowSize,-\RegSize-\ArrowSize);
    \draw (4*\URegSize+\RegSize+5*\ArrowSize-0.25,-\ArrowSize) circle (\CPRadius pt);
    
	%upper line
    \path[->, red, thick] (1.5*\URegSize+\RegSize+3*\ArrowSize,\RegSize) edge [ shorten <= 0.1cm, shorten >= 0.1cm] node[left=0pt] {0} (1.5*\URegSize+\RegSize+3*\ArrowSize,\RegSize+\ArrowSize);
        
	\draw [-, solid, thick] (2*\URegSize+\RegSize+3*\ArrowSize,\RegSize+\ArrowSize) -- (1*\URegSize+\RegSize+3*\ArrowSize,\RegSize+\ArrowSize) ;
    \draw (2*\URegSize+\RegSize+3*\ArrowSize-0.25,\RegSize+\ArrowSize) circle (\CPRadius pt);

    \path[->, red, thick] (2*\URegSize+\RegSize+3*\ArrowSize, 1.5*\RegSize+\ArrowSize) edge [ shorten <= 0.1cm, shorten >= 0.1cm] node[above=0pt] {0} (2*\URegSize+\RegSize+4*\ArrowSize, 1.5*\RegSize+\ArrowSize);
    
	\draw [-, dotted, thick] (3*\URegSize+\RegSize+4*\ArrowSize, \RegSize+\ArrowSize) -- (2*\URegSize+\RegSize+4*\ArrowSize,\RegSize+\ArrowSize) -- (2*\URegSize+\RegSize+4*\ArrowSize,2*\RegSize+\ArrowSize);
    \draw (3*\URegSize+\RegSize+4*\ArrowSize-0.25,\RegSize+\ArrowSize) circle (\CPRadius pt);

    \path[->, red, thick] (3*\URegSize+\RegSize+4*\ArrowSize, 1.5*\RegSize+\ArrowSize) edge [ shorten <= 0.1cm, shorten >= 0.1cm] node[above=0pt] {1} (3*\URegSize+\RegSize+5*\ArrowSize, 1.5*\RegSize+\ArrowSize);

	\draw [-, dotted, thick] (4*\URegSize+\RegSize+5*\ArrowSize, \RegSize+\ArrowSize) -- (3*\URegSize+\RegSize+5*\ArrowSize, \RegSize+\ArrowSize) -- (3*\URegSize+\RegSize+5*\ArrowSize, 2*\RegSize+\ArrowSize);
    \draw (4*\URegSize+\RegSize+5*\ArrowSize-0.25, 2*\RegSize+\ArrowSize-0.25) circle (\CPRadius pt);
        
    \path[->, red, thick] (3.5*\URegSize+\RegSize+5*\ArrowSize, \RegSize+\ArrowSize) edge [ shorten <= 0.25cm, shorten >= 0.25cm, loop below, distance=1cm] node[below=2pt] {1} (3.5*\URegSize+\RegSize+5*\ArrowSize, \RegSize+\ArrowSize);
        
	%\draw [-, solid, thick] (10,2) -- (9,2) -- (9,3);
	%\draw (10,3) circle (3pt);

	%\x+0.2,\y+0.1) -- (\x+0.9,\y+0.1) -- (\x+0.9,\y+0.8) -- (\x+0.2,\y+0.1);
\end{tikzpicture}
}
\end{minipage}
\caption{Corner point abstraction}\label{fig:cpa}
\end{figure}

\paragraph{Construction of the Parametric 0/1-Timed Automaton.}
Now we show how to construct for a given PTA with one parametric clock an
equivalent 0/1-PTA with just one clock.
Let $A=(\Sigma,L,\ell_0,F,I,\goes{})$ be the original PTA over the set of clocks 
$\clocks$ and parameters $\parameters$. Let $\paramclock$ denote the only parametric
clock.

We first modify the automaton by adding a~fresh clock $\newclock$ as follows:
every transition $\ell \goes{g,a,R} \ell'$ is changed into 
$\ell \goes{g \wedge \newclock < 1,a,R'} \ell'$ where $R' = R$ if $\paramclock \not\in R$, and
$R' = R \cup \{\newclock\}$ otherwise.
To every location $\ell$ we then add a~new self-loop transition
$\ell \goes{\newclock = 1,a,\{\newclock\}} \ell$.
Intuitively, the new clock $\newclock$ will always contain 
the fractional part of $\paramclock$.
We call this new automaton $A'$.
Clearly, this modification preserves the language (non)emptiness of the original 
automaton~$A$.

In the second step, we use the corner-point abstraction of $A'$ with respect to
all clocks except for $\paramclock$ to create the 0/1-timed automaton with a~single clock.
Let $\hat\clocks = (\clocks \cup \{\newclock\}) \setminus \{\paramclock\}$ and
let $M$ be the largest constant appearing in the guards concerning the clocks 
in~$\hat\clocks$. In the following, we consider regions and corner-points with respect
to clocks in $\hat\clocks$ and the bound $M$. Let $\Regions$ denote the set of
all such regions and let $\CP$ denote the set of all corresponding corner-points,
i.e.~$\CP = (\N \cap [0,M+1])^{\hat\clocks}$.

We use the following auxiliary notation. Let $r \in \Regions$
and $\alpha \in \CP$.
\[ \iota(r,\alpha) = \begin{cases}
	\LESS & \alpha(\newclock) = 1 \text{ and } r \not\models \newclock = 1\\
	\MORE & \alpha(\newclock) = 0 \text{ and } r \not\models \newclock = 0\\
	\EXACT&\text{otherwise}
\end{cases}\]

The 0/1-timed automaton over the singleton set of clocks $\{\hatparamclock\}$
is $\hat A = (\Sigma, L \times \Regions \times \CP,
(\ell_0,r_0,\alpha_0), F \times \Regions \times \CP, I, \goes{})$
where $r_0$ is the initial region and $\alpha_0(x) = 0$ for all $x \in \hat\clocks$ is the initial corner-point.
The transition relation is defined as follows:
\begin{itemize}
\item zero delay: $(\ell,r,\alpha) \goes{0} (\ell,r',\alpha)$ if $r' = succ(r)$ and 
$\alpha$ is a~corner-point of both $r$ and $r'$;
\item unit delay: $(\ell,r,\alpha) \goes{1} (\ell,r,\alpha')$ if $\alpha' = succ(\alpha)$
and both $\alpha$ and $\alpha'$ are corner-points of $r$;
\item action: whenever $\ell \goes{g,a,R} \ell'$ in $A'$ then
let $g_1$, \ldots, $g_k$ be all the simple clock constraints appearing in $g$ comparing
clocks from $\hat\clocks$ and let $h_1$, \ldots, $h_n$ be the remaining simple clock
constraints, i.e.~those that consider $\paramclock$.
For every $(\ell,r,\alpha)$ that satisfies 
(1) $r \models g_1 \wedge \cdots \wedge g_k$
and
(2) if $\iota(r,\alpha) \ne \EXACT$ then no $h_i$ contains equality (=),
we set $(\ell,r,\alpha) \goes{\hat{h}_1\wedge\cdots\wedge\hat{h}_n, \hat R}
(\ell',r[R \setminus \{\paramclock\}], \alpha[R\setminus\{\paramclock\}])$,
where $\hat R = \{\hatparamclock\}$ if $\paramclock \in R$ and $\hat R = \emptyset$ otherwise.
The constraints
$\hat h_i$ are created as follows:
all $\paramclock$ are changed into $\hatparamclock$;
if $\iota(r,\alpha) = \LESS$, all $<$ are changed into $\le$ and all $\ge$ are changed
into $>$;
if $\iota(r,\alpha) = \MORE$, all $\le$ are changed into $<$ and all $>$ are changed
into $\ge$.
\end{itemize}

%The semantics of a~0/1-timed automaton is similar to that of a~timed automaton
%except for the fact that the delays are explicitly given by the $0$, $1$ delay
%transitions and the valuations of clocks are natural numbers.
%In our case we have only one clock, $\hat\paramclock$, which 
%means that a~configuration of $\hat A$ is
%a~tuple $(\ell,r,\alpha,t)$ where $(\ell,r,\alpha)$ is a~location of $\hat A$
%and $t \in \N$ represents the valuation of $\hat\paramclock$.

%We now state the main theorem of this section.

\begin{theorem} \label{thm:main}
The reachability and safety problems for parametric timed automata over integer
parameters with one parametric clock in the continuous-time semantics are
decidable. Moreover, the reachability problem is in NEXPTIME.
\end{theorem}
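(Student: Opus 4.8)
The plan is to derive the theorem from the correctness of the construction of $\hat A$ given above together with the two cited results on parametric $0/1$-timed automata. Write $S_A = \{\paramval \mid L_\paramval(A) \ne \emptyset\}$ and let $S_{\hat A}$ be the set of parameter valuations under which the parametric $0/1$-timed automaton $\hat A$ can reach an accepting location. The key claim to establish is $S_A = S_{\hat A}$. Granting it, reachability for $A$ is exactly the question whether $S_{\hat A} \ne \emptyset$, and safety is the question whether the complement of $S_{\hat A}$ inside $\N^{\parameters}$ is nonempty. By~\cite{AHV:STOC:93} the set $S_{\hat A}$ is effectively computable and definable as a~semilinear set; hence both its emptiness and the emptiness of its complement are decidable, which settles decidability of reachability and of safety.

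Since the modification from $A$ to $A'$ preserves $L_\paramval$ for every $\paramval$, it suffices to prove $S_{A'} = S_{\hat A}$, and I would do so by exhibiting, for each fixed $\paramval$, a~step-by-step simulation between $T_\paramval(A')$ and the instantiation of $\hat A$ in both directions. The simulation relation links a~configuration $(\ell,\clockval)$ of $T_\paramval(A')$ to a~state $(\ell,r,\alpha)$ of $\hat A$ carrying clock value $\hatparamclock$ under the invariant that $r$ is the region of $\restr{\clockval}{\hat\clocks}$ and $\alpha$ is one of its corner points; that $\clockval(\newclock) = \fr(\clockval(\paramclock))$, as guaranteed by the fresh clock $\newclock$ together with its resetting self-loop; and that $\hatparamclock$ equals $\clockval(\paramclock)$ rounded to the adjacent integer in the direction recorded by $\iota(r,\alpha)$, namely rounded up when $\iota(r,\alpha)=\LESS$, rounded down when $\iota(r,\alpha)=\MORE$, and equal to the (then integral) value $\clockval(\paramclock)$ when $\iota(r,\alpha)=\EXACT$. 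In the completeness direction I would split each continuous delay of $T_\paramval(A')$ into the matching sequence of $0$-delays (advancing to $\suc(r)$ with the corner point fixed) and $1$-delays (advancing to $\suc(\alpha)$ together with $\hatparamclock$), inserting the $\newclock$ self-loop exactly when $\clockval(\paramclock)$ crosses an integer; in the soundness direction I would reconstruct concrete delays from such a~corner-point trajectory.

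The heart of the argument is the verification that, under this invariant, a~state $(\ell,r,\alpha)$ with clock value $\hatparamclock$ satisfies the rewritten constraint $\hat h_i$ iff $\clockval,\paramval \models h_i$ for the corresponding simple constraint $h_i$ on $\paramclock$. This is a~short case analysis on the comparison operator and on $\iota(r,\alpha)$: for example, when $\iota(r,\alpha)=\LESS$ the true value $\clockval(\paramclock)$ sits infinitesimally below the integer $\hatparamclock$, so $\paramclock < p$ holds iff $\hatparamclock \le p$ and $\paramclock \ge p$ holds iff $\hatparamclock > p$, which is exactly the rewriting that turns $<$ into $\le$ and $\ge$ into $>$; the $\MORE$ case is symmetric. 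An equality $\paramclock = p$ can hold only when $\clockval(\paramclock)$ is genuinely integral, i.e.~when $\iota(r,\alpha)=\EXACT$, which is precisely what the side condition forbidding equalities among the $h_i$ unless $\iota(r,\alpha)=\EXACT$ enforces.

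For the complexity bound I would bound the state space of $\hat A$: it has $|L|\cdot|\Regions|\cdot|\CP|$ locations, and both $|\Regions|$ and $|\CP|$ are exponential in the number of nonparametric clocks and in $M$, so $\hat A$ is of exponential size in $A$ while remaining a~parametric $0/1$-timed automaton with a~single clock. By~\cite{BO:MFCS:14}, reachability for such automata reduces in polynomial time to the halting problem of parametric bounded one-counter machines, which lies in NP; running this NP procedure on the exponential-sized $\hat A$ places the reachability problem for $A$ in NEXPTIME. The main obstacle throughout is the two-directional simulation above — in particular ensuring that the interplay of $0/1$-delays, the $\newclock$ self-loop and the $\iota$-driven rewriting neither introduces spurious accepting runs nor discards genuine ones, with the boundary and equality cases demanding the most care; the remaining steps are a~routine induction on run length together with the invocation of the two cited results.
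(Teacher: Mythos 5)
Your proposal is correct and takes essentially the same approach as the paper: your rounding invariant is exactly the paper's \emph{correspondence} relation ($t = \lfloor\nu(\paramclock)\rfloor + f(r,\alpha)$ with the $\iota(r,\alpha)$-driven offset), your bidirectional delay/action simulations with delays split into region-crossing $0/1$-steps are precisely its four appendix lemmata, and your decidability and NEXPTIME conclusions invoke \cite{AHV:STOC:93} and \cite{BO:MFCS:14} on the exponentially larger $\hat A$ exactly as the paper does. The only cosmetic deviation is your explicit semilinearity remark for the safety direction (the paper simply cites decidability of both problems from \cite{AHV:STOC:93}) and a harmless imprecision in stating $\nu(\newclock)=\fr(\nu(\paramclock))$, which fails only at integer crossings where $\newclock$ may equal $1$ before the self-loop fires---a boundary case your $\iota$-based rounding already handles correctly.
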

\begin{proof}[Idea]
Due to space constraints, the complete proof can be found in Appendix~\ref{app:dec}.
As mentioned above, the modification from $A$ to $A'$ 
preserves the language (non)emptiness.
The idea of the proof is to show that for every given parameter valuation,
every run of $A'$ has a~corresponding run in $\hat A$ and vice versa.
This shows that the reachability and safety problems for parametric timed
automata with one parametric clock reduce to the reachability and safety problems
for parametric 0/1-timed automata. These problems were shown decidable
in~\cite{AHV:STOC:93}. The complexity argument is discussed
in the beginning of this section.\qed
\end{proof}

\section{Conclusion} \label{sec:conclusion}
We have shown that for three parametric clocks with a single integer parameter, both
the reachability and safety problems are undecidable in the discrete as well as 
the continuous semantics. This improves the previously
known undecidability result by Alur, Henzinger and Vardi~\cite{AHV:STOC:93} 
where six parameters were needed. 
For the case with a single parametric clock with an unrestricted number of
integer parameters and with any number of additional nonparametric clocks, 
we contributed to the solution of an open problem stated more than 20 years ago
by proving a decidability result for reachability and safety problems
in the continuous semantics, extending the previously known 
decidability result for the discrete-time semantics~\cite{AHV:STOC:93}.
To achieve this result, we used the corner-point abstraction technique that 
had to be modified to handle also corner-points in unbounded regions,
contrary to the use of the technique in~\cite{BFHLPRV:HSCC:01}.
Not surprisingly, the decidability of the problem in case of two 
parametric clocks in the continuous-time setting remains open, 
as it is the case also for a number
of other problems over timed automata with two real-time 
clocks~\cite{BBM:IPL:06}.
On the other hand, as demonstrated by our wireless fire alarm case study,
the parameter synthesis problem for one parametric clock and an unlimited
number of parameters is sufficiently expressive in order to describe
nontrivial scheduling problems. As a next step, we will consider
moving from corner-point regions into zones and provide an
efficient implementation of the presented techniques. 

\paragraph{Acknowledgements.}
We acknowledge a funding from
from the EU FP7 grant agreement nr. 318490 (SENSATION) and grant
agreement nr. 601148 (CASSTING) and 
from the Sino-Danish Basic Research Center IDEA4CPS.

\bibliographystyle{plain}
\bibliography{main}

% Appendix
\clearpage
\appendix
\section{Appendix: Proof of Lemma~\ref{lem:minsky}}\label{app:undec}

\textbf{Lemma~\ref{lem:minsky}.}
\textit{Let $M$ be a Minsky machine.
Let $A$ be the PTA built according to the rules in Figures~\ref{fig:encoding:inc} and \ref{fig:encoding:dec} (without the 
transitions for safety) and where $\ell_1$ is the initial 
location and $\ell_n$ is the only accepting location.
The Minsky machine $M$ halts if and only if there is a parameter
valuation $\paramval$ such that $L_\paramval(A) \not= \emptyset$.}

\begin{proof}
We shall first argue that from the configuration $(\ell_i,\clockval)$
where $\clockval(z)=0$ and where $\clockval(x_1)$ 
and $\clockval(x_2)$ represent the counter values, there is a unique 
way to move from $\ell_i$ to $\ell_j$ (or possibly also to $\ell_k$ in case of 
test and decrement instruction) where again $\clockval(z)=0$ and the counter
values are updated accordingly. As there are no invariants in the automaton,
we can always delay long enough so that we get stuck in a given location,
but this behaviour will not influence the language emptiness problem we
are interested in.

Consider first the automaton for the increment instruction from
Figure~\ref{fig:encoding:inc} and assume we are in a configuration 
$(\ell_i,\clockval)$ where $\clockval(z)=0$, $\clockval(x_1)=v_1$ 
and $\clockval(x_2)=v_2$.
First note that if $v_1\geq p$ then there is no execution ending in $\ell_k$
due to the forced delay of one time unit on the transition from
$\ell_i$ to $\ell_i^1$ and the guard $x_1=p$ tested in both the upper
and lower branch in the automaton. Assume thus that $v_1 < p$.
If $v_1 \geq v_2$ then we can perform the following execution
%move along the locations
%$\ell_i^1$, $\ell_i^2$, $\ell_i^3$ and $\ell_i^4$ to the location $\ell_k$
with uniquely determined time delays:
\begin{align*}
(\ell_i, [x_1 \mapsto v_1, x_2 \mapsto v_2, z \mapsto 0])
&\goes{1}\\
(\ell_i^1, [x_1 \mapsto v_1+1, x_2 \mapsto v_2+1, z \mapsto 0])
&\goes{p-v_1-1}\\
(\ell_i^2, [x_1 \mapsto 0, x_2 \mapsto p - v_1 + v_2, z \mapsto p-v_1-1])
&\goes{v_1-v_2}\\
(\ell_i^3, [x_1 \mapsto v_1-v_2, x_2 \mapsto 0, z \mapsto p-v_2-1])
&\goes{1}\\
(\ell_i^4, [x_1 \mapsto v_1-v_2+1, x_2 \mapsto 0, z \mapsto p-v_2])
&\goes{v_2}\\
(\ell_j, [x_1 \mapsto v_1+1, x_2 \mapsto v_2, z \mapsto 0])
&.
\end{align*}
In this case where $v_1 \geq v_2$, executing the lower branch of the automaton
will result in getting stuck in the location $\ell_i^6$ as here
necessarily $\clockval(x_1)>p$.
Assume now that $v_1 < v_2$. If we take upper branch in the automaton now then
we get stuck. However, we can execute along the lower branch as follows:
\begin{align*}
(\ell_i, [x_1 \mapsto v_1, x_2 \mapsto v_2, z \mapsto 0]) 
&\goes{1}\\
(\ell_i^1, [x_1 \mapsto v_1+1, x_2 \mapsto v_2+1, z \mapsto 0]) 
&\goes{p-v_2-1}\\
(\ell_i^5, [x_1 \mapsto p- v_2 + v_1, x_2 \mapsto 0, z \mapsto p-v_2-1]) 
&\goes{1}\\
(\ell_i^6, [x_1 \mapsto p- v_2 + v_1+1, x_2 \mapsto 0, z \mapsto p-v_2]) 
&\goes{v_2 -v_1 -1}\\
 (\ell_i^4, [x_1 \mapsto 0, x_2 \mapsto  v_2 -v_1-1, z \mapsto p-v_1-1]) 
&\goes{v_1 +1}\\
 (\ell_j, [x_1 \mapsto v_1 + 1, x_2 \mapsto  v_2 , z \mapsto 0])
&. 
\end{align*}
In both cases, it is clear that the clock valuation of $x_1$ was
incremented by one (due to uniquely given time delays during the computations)
and hence they faithfully simulate the behaviour of the
Minsky machine.

Regarding the automaton for test and decrement instruction from 
Figure~\ref{fig:encoding:dec}, assume we are in the configuration 
$(\ell_i,\clockval)$ with $\clockval(x_1)=v_1$, $\clockval(x_2)=v_2$ 
and $\clockval(z)=0$.
It is clear that if $v_1=0$ then we continue
from location $\ell_k$ as expected and if $v_1>0$ then we can enter the 
configuration $(\ell_i^1, x_1 \mapsto v_1, x_2 \mapsto v_2, z \mapsto 0)$---
note that the guard $z=0$ guarantees that no time has elapsed.
In the latter case, if $v_1 > v_2$ then we can execute the  
upper branch as follows:
\begin{align*}
(\ell_i^1, x_1 \mapsto v_1, x_2 \mapsto v_2, z \mapsto 0)
&\goes{p-v_1}\\
(\ell_i^2, x_1 \mapsto 0, x_2 \mapsto p - v_1 + v_2, z \mapsto p-v_1)
&\goes{1}\\
(\ell_i^3, x_1 \mapsto 0, x_2 \mapsto p - v_1 + v_2 + 1, z \mapsto p-v_1 +1)
&\goes{v_1 - v_2 - 1}\\
(\ell_i^4, x_1 \mapsto v_1 - v_2 - 1, x_2 \mapsto 0, z \mapsto p- v_2)
&\goes{v_2}\\
(\ell_j, x_1 \mapsto v_1 - 1, x_2 \mapsto v_2, z \mapsto 0)
\end{align*}
and if $v_1 \leq v_2$ we can execute the lower branch as follows:
\begin{align*}
(\ell_i^1, x_1 \mapsto v_1, x_2 \mapsto v_2, z \mapsto 0)
&\goes{p-v_2}\\
(\ell_i^5, x_1 \mapsto p-v_2 + v_1, x_2 \mapsto 0, z \mapsto p-v_2)
&\goes{v_2 - v_1}\\
(\ell_i^6, x_1 \mapsto 0, x_2 \mapsto  v_2 -v_1, z \mapsto p-v_1)
&\goes{1}\\
(\ell_i^4, x_1 \mapsto 0, x_2 \mapsto v_2 - v_1+1, z \mapsto p- v_1 +1)
&\goes{v_1 - 1}\\
(\ell_j, x_1 \mapsto v_1 - 1, x_2 \mapsto v_2, z \mapsto 0)
&.
\end{align*}
Clearly, the clock value in $x_1$ has been decremented in both cases.
Should the lower branch be taken in case $v_1 > v_2$ or the upper branch
in case $v_1 \leq v_2$, we get stuck again.

Now if the Minsky machine halts then
in the constructed PTA we can reach the accepting location $\ell_n$ for
any parameter valuation $\paramval(p)$ larger than the maximum
value of the counters during the computation and the answer to 
the reachability problem is hence positive. If, on the other hand, 
the Minsky machine loops
then there is no parameter valuation $\paramval(p)$ that will allow us
to reach the location~$\ell_n$. This is due to the fact that either 
one of the counters exceeds
the chosen parameter value and we get stuck or the computation will
continue for ever and never reach $\ell_n$.
\qed
\end{proof}

\section{Appendix: Proof of Theorem~\ref{thm:main}}\label{app:dec}

\textbf{Theorem~\ref{thm:main}.}
\textit{The reachability and safety problems for parametric timed automata over integer
parameters with one parametric clock in the continuous-time semantics are
decidable. Moreover, the reachability problem is in NEXPTIME.}

\medskip

The semantics of a~0/1-timed automaton is similar to that of a~timed automaton
except for the fact that the delays are explicitly given by the 0/1-delay
transitions and the valuations of clocks are natural numbers.
In our case we have only one clock, $\hatparamclock$, which 
means that a~configuration of $\hat A$ is
a~tuple $(\ell,r,\alpha,t)$ where $(\ell,r,\alpha)$ is a~location of $\hat A$
and $t \in \N$ represents the valuation of $\hatparamclock$.

Note that the 0/1-delay transitions in $\hat A$ are always deterministic and exclusive:
every location $(\ell,r,\alpha)$ either has an outgoing 0-delay transition or 
an~outgoing 1-delay transition, but not both. Moreover, after a~1-delay transition
there always follows a~0-delay transition, except in the case when the 1-delay transition
ends in the unbounded region.

Recall the $\iota(r,\alpha)$ notation for $r \in \Regions$
and $\alpha \in \CP$:
\[ \iota(r,\alpha) = \begin{cases}
	\LESS & \alpha(\newclock) = 1 \text{ and } r \not\models \newclock = 1\\
	\MORE & \alpha(\newclock) = 0 \text{ and } r \not\models \newclock = 0\\
	\EXACT&\text{otherwise}
\end{cases}\]

In order to prove the correctness of our construction, we also define 
an~auxiliary notion of \emph{correspondence}.
%\begin{definition}
Let $\nu : (\clocks \cup \{\newclock\}) \to \R$, $r \in \Regions$, $\alpha \in \CP$, and
$t \in \N$. We say that $\nu$ corresponds with $(r,\alpha,t)$ if the following holds:
\begin{enumerate}
\item $\restr{\nu}{\hat\clocks} \in r$;
\item $\lfloor \nu(\paramclock) \rfloor + f(r,\alpha) = t$,
where $f(r,\alpha) = 1$ if $\iota(r,\alpha) = \LESS$ and $f(r,\alpha) = 0$ otherwise;
\item $\nu(\paramclock) \in \N \iff \iota(r,\alpha) = \EXACT$.
\end{enumerate}
%\end{definition}

In the following, let us fix a valuation of all parameters of $\parameters$.
We are going to show that all runs of $A'$ have corresponding runs in $\hat A$ and
vice versa. Note that due to the construction of $A'$, we shall ignore all runs in
which the new clock~$\newclock$ becomes larger than 1, as such situations are
effectively deadlocks.

\begin{lemma}\label{lem:delay-to-01}
Let $(\ell,\nu)$ be a~configuration of $A'$ with $\nu(\newclock) \le 1$,  let
$(\ell,\nu) \goes{d} (\ell,\nu+d)$ be a~delay transition with 
$d \in [0, 1-\nu(\newclock)]$, and let
$(r,\alpha,t) \in \Regions \times \CP \times \N$ such that 
$\nu$ corresponds with $(r,\alpha,t)$. 
Then $(\ell,r,\alpha,t) \goes{}^* (\ell,r',\alpha',t')$ such that
$\nu+d$ corresponds with $(r',\alpha',t')$.
\end{lemma}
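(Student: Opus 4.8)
The plan is to prove Lemma~\ref{lem:delay-to-01} by analyzing how a continuous delay $d \in [0, 1-\nu(\newclock)]$ in $A'$ decomposes into a sequence of $0$-delay and $1$-delay transitions in $\hat A$, and tracking the correspondence invariant throughout. Since $d$ is bounded by $1 - \nu(\newclock) \le 1$, the fractional parts of the nonparametric clocks in $\hat\clocks$ evolve by at most one unit, so the region $\restr{\nu}{\hat\clocks}$ traverses at most a bounded chain of time-successor regions; each such region change is mirrored by a $0$-delay transition $(\ell,r,\alpha)\goes{0}(\ell,\suc(r),\alpha)$ in $\hat A$ (where the corner point $\alpha$ is shared between $r$ and $\suc(r)$). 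The crossing of an integer boundary by $\nu(\paramclock)$ corresponds to a $1$-delay transition that advances the corner point via $\suc(\alpha)$ and increments the clock $\hatparamclock$. The first step is therefore to set up a careful case analysis based on the value of $\iota(r,\alpha) \in \{\LESS,\MORE,\EXACT\}$, which records whether the current configuration sits at, just-below, or just-above an integer value of the parametric clock.

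Next I would verify that the correspondence conditions (1)--(3) are preserved at each elementary step. The key bookkeeping is condition (2), namely $\lfloor \nu(\paramclock)\rfloor + f(r,\alpha) = t$: I must check that whenever $\nu(\paramclock)$ crosses an integer (so $\lfloor\nu(\paramclock)\rfloor$ increases by one and $\iota$ transitions through $\EXACT$), the combination of the region/corner-point update and the offset $f(r,\alpha)$ keeps the natural number $t$ synchronized with the correct clock value of $\hatparamclock$. Concretely, as $\nu(\paramclock)$ increases from just-below an integer (state $\LESS$, $f=1$) to exactly that integer (state $\EXACT$, $f=0$) to just-above (state $\MORE$, $f=0$), the value $\lfloor\nu(\paramclock)\rfloor + f$ stays constant across the $\LESS \to \EXACT$ transition and then $\hatparamclock$ must be incremented precisely at the moment condition (3) forces $\iota = \EXACT$, i.e.~when $\nu(\paramclock) \in \N$. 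The invariant that $\nu(\newclock)$ always equals the fractional part of $\nu(\paramclock)$ (guaranteed by the construction of $A'$) is what links the corner-point coordinate $\alpha(\newclock)$ and the region's constraint on $\newclock$ to the $\iota$ value, so I would invoke this explicitly.

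The main obstacle I expect is handling the unbounded region correctly, where the standard corner-point abstraction of~\cite{BFHLPRV:HSCC:01} does not directly apply and for which the paper added extra corner points. When $\nu(\paramclock) \ge M$ (or the relevant nonparametric clocks exceed $M$), the integral-part tracking saturates at $M+1$ and the successor operation $\suc(\alpha)$ becomes the identity on those coordinates. I would need to confirm that in this saturated regime the delay still admits a corresponding $\hat A$ run, possibly using the self-loop $1$-delay transition in the unbounded region mentioned after the correspondence definition, and that conditions (2) and (3) remain consistent even though the exact integral value of $\paramclock$ is no longer separately tracked by $t$ once we are past the maximal constant. Establishing that no spurious mismatch arises at the boundary $\nu(\paramclock) = M$ between the bounded and unbounded treatments is the delicate point.

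Finally, I would assemble these elementary steps into the claimed reachability $(\ell,r,\alpha,t)\goes{}^*(\ell,r',\alpha',t')$ by induction on the number of integer and fractional boundaries crossed during the delay $d$. Since the location $\ell$ is fixed throughout a delay and only $0$-delay and $1$-delay transitions are used (no action transitions), the induction is on the bounded length of the chain of region/corner-point successors, with the base case $d=0$ being immediate from the hypothesis that $\nu$ already corresponds with $(r,\alpha,t)$. The concluding configuration $(r',\alpha',t')$ is read off as the final region, corner point, and clock value reached, and I would check directly that $\nu+d$ corresponds with it by applying conditions (1)--(3) one last time.
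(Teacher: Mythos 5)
Your high-level plan---split $d$ into elementary sub-delays crossing at most one region boundary, mirror region changes by $0$-delays and corner-point advances by $1$-delays, and verify correspondence conditions (1)--(3) by a case analysis keyed to $\iota(r,\alpha)$ and the $f$-offset---is exactly the paper's strategy. But two of your specific claims would derail the verification if executed as written. First, you schedule the increment of $t$ wrongly: you say $\hatparamclock$ ``must be incremented precisely at the moment condition (3) forces $\iota = \EXACT$, i.e.~when $\nu(\paramclock) \in \N$,'' but no transition of $\hat A$ fires at that moment. The unit delay $(\ell,r,\alpha) \goes{1} (\ell,r,\alpha')$ requires both $\alpha$ and $\alpha' = \suc(\alpha)$ to be corner points of $r$, which for the $\newclock$-coordinate forces $\alpha(\newclock)=0$, $\alpha'(\newclock)=1$ and the $\newclock$-component of $r$ to be the open interval $(0,1)$, i.e.~$\iota(r,\alpha) = \MORE$. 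So $t$ is incremented strictly \emph{between} integer values of $\paramclock$, in the $\MORE \to \LESS$ step (which the paper immediately pairs with a $0$-delay, giving the composite $\goes{1}\goes{0}$), and it is the offset $f(r,\alpha)=1$ in state $\LESS$ that keeps condition (2) true until $\nu(\paramclock)$ actually reaches the integer via a $0$-delay into the region $\newclock = 1$, where $f$ drops to $0$ exactly as $\lfloor\nu(\paramclock)\rfloor$ increments. Your own $\LESS$/$f=1$ bookkeeping already encodes this anticipation, so the ``increment at $\EXACT$'' sentence contradicts the rest of your paragraph; carried out literally, that case of your induction has no matching transition in $\hat A$.

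Second, your declared ``main obstacle''---that past the maximal constant the integral value of $\paramclock$ ``is no longer separately tracked by $t$'' because $\suc(\alpha)$ saturates at $M+1$---rests on a misreading of the construction. The corner-point abstraction is taken only over $\hat\clocks = (\clocks \cup \{\newclock\}) \setminus \{\paramclock\}$; the parametric clock is never abstracted, and $t \in \N$ is the exact, unbounded value of $\hatparamclock$, so $t$ never saturates and there is no delicate boundary at $\nu(\paramclock) = M$ to reconcile. The extra corner points in unbounded regions concern only the nonparametric clocks, and for this lemma the hypothesis $d \le 1 - \nu(\newclock)$ keeps the $\newclock$-coordinate bounded, which is precisely what the paper invokes to guarantee that every $1$-delay is followed by a $0$-delay ($r$ is not the unbounded region). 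A proof organized around reconciling a saturated $t$ with $\nu(\paramclock)$ would be aimed at the wrong statement; the correct case split, after reducing to a small delay that either stays in $r$ or moves to $\suc(r)$, is simply on whether $(\ell,r,\alpha)$ has a $0$-delay or a $1$-delay, refined by the three positions of $\nu(\newclock)$ (namely $0$, in $(0,1-d)$, or $1-d$).
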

\begin{proof}
For simplicity, we assume that $d$ is small enough in the following sense:
Either $\nu$ and $\nu + d$ (restricted to clocks of $\hat\clocks$) are
in the same region, or the region of $\nu + d$ is a~successor of the region of $\nu$.
This comes without loss of generality, as every delay transition can be split
into finitely many such small delay transitions.
We also assume that $d > 0$.

If both $\restr{\nu}{\hat\clocks}$ and $\restr{(\nu + d)}{\hat\clocks}$
belong to $r$ then clearly 
$\lfloor\nu(\paramclock)+d\rfloor = \lfloor\nu(\paramclock)\rfloor$ and
neither of $\nu(\paramclock)$ and $\nu(\paramclock) + d$ belong to $\N$.
This means that $\nu + d$ corresponds with $(r,\alpha,t)$.

Let us now assume that $\restr{\nu}{\hat\clocks} \in r$ and
$\restr{(\nu + d)}{\hat\clocks} \in r'$ where $r'$ is the successor of $r$.
There are two possibilities depending on whether $(\ell,r,\alpha)$ has 
a~0-delay or a~1-delay transition.

If $(\ell,r,\alpha) \goes{0} (\ell,r',\alpha)$, this means that also
$(\ell,r,\alpha,t) \goes{0} (\ell,r',\alpha,t)$. We show that $\nu + d$
corresponds with $(r',\alpha,t)$. Condition 1 is clearly satisfied.
To show the satisfaction of Conditions 2 and 3, we need to discuss three cases:
\begin{itemize}
\item $\nu(\newclock) = 0$: This means that necessarily $r \models \newclock = 0$ and
$\alpha(\newclock) = 0$, thus $\iota(r,\alpha) = \EXACT$ and $\iota(r',\alpha) = \MORE$.
Therefore $f(r,\alpha) = f(r',\alpha) = 0$ and the conditions are clearly met
as $\lfloor\nu(\paramclock)+d\rfloor = \lfloor\nu(\paramclock)\rfloor$
and $\nu(\paramclock) + d \not \in \N$.
\item $\nu(\newclock) \in (0,1-d)$: This means that necessarily
neither $r$ nor $r'$ contain a~valuation with integer value for $\newclock$
and thus $\iota(r,\alpha) = \iota(r',\alpha) \ne \EXACT$. This means that
$f(r,\alpha) = f(r',\alpha)$ and the conditions are clearly met
as $\lfloor\nu(\paramclock)+d\rfloor = \lfloor\nu(\paramclock)\rfloor$
and both $\nu(\paramclock)$, $\nu(\paramclock) + d \not \in \N$.
\item $\nu(\newclock) = 1-d$: This means that necessarily
$r' \models \newclock = 1$ and $\alpha(\newclock) = 1$,
thus $\iota(r,\alpha) = \LESS$ and $\iota(r',\alpha) = \EXACT$.
This means that $f(r,\alpha) = 1$ while $f(r',\alpha) = 0$.
We have $\lfloor\nu(\paramclock)+d\rfloor = \lfloor\nu(\paramclock)\rfloor+1$
and $\nu(\paramclock)+d \in \N$, the conditions are thus met again.
\end{itemize}

If $(\ell,r,\alpha) \goes{1} (\ell,r,\alpha')$ then also
$(\ell,r,\alpha') \goes{0} (\ell,r',\alpha')$ as noted above (due to the bound on
$\nu(z)$, $r$ is not the unbounded region).
This means that $(\ell,r,\alpha,t) \goes{1}\goes{0} (\ell,r',\alpha',t+1)$.
Here, $\alpha'$ is the successor of $\alpha$ and $r'$ is the successor of $r$.
We show that $\nu + d$ corresponds with $(r',\alpha',t+1)$.
Again, Condition 1 is clearly satisfied.
Note that in this case $r \not\models \newclock = 0$ and $\alpha(\newclock) = 0$.
This means that $\alpha'(\newclock) = 1$, $\iota(r,\alpha) = \MORE$,
$\iota(r,\alpha') = \LESS$, $f(r,\alpha) = 0$, and $f(r,\alpha') = 1$.
This also means that $\nu(\newclock) > 0$ and we only have two cases:
\begin{itemize}
\item $\nu(\newclock) \in (0,1-d)$:
This means that $r' \not\models \newclock = 1$ and thus $\iota(r',\alpha') = \LESS$.
Condition 3 is clearly satisfied as $\nu(\paramclock)+d \not\in \N$.
To show that Condition 2 is satisfied, consider that $f(r',\alpha') = 1$
and $\lfloor\nu(\paramclock)+d\rfloor = \lfloor\nu(\paramclock)\rfloor$.
\item $\nu(\newclock) = 1-d$:
This means that $r' \models \newclock = 1$ and thus $\iota(r',\alpha') = \EXACT$.
Condition 3 is clearly satisfied as $\nu(\paramclock)+d \in \N$.
To show that Condition 2 is satisfied, consider that $f(r',\alpha') = 0$
and
$\lfloor\nu(\paramclock)+d\rfloor = \lfloor\nu(\paramclock)\rfloor + 1 = t + 1$.
\end{itemize}
\qed
\end{proof}

\begin{lemma}\label{lem:act-to-01}
Let $(\ell,\nu)$ be a~configuration of $A'$ with $\nu(\newclock) \le 1$,  let
$(\ell,\nu) \goes{a} (\ell',\nu')$ be an~action transition, and let
$(r,\alpha,t) \in \Regions \times \CP \times \N$ such that 
$\nu$ corresponds with $(r,\alpha,t)$. 
Then $(\ell,r,\alpha,t) \goes{a} (\ell',r',\alpha',t')$ such that
$\nu'$ corresponds with $(r',\alpha',t')$.
\end{lemma}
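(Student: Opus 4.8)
The plan is to follow the same two-phase pattern as Lemma~\ref{lem:delay-to-01}: first produce the matching action transition of $\hat A$ and check that it is enabled at $(\ell,r,\alpha,t)$, then verify the three correspondence conditions for its target. I would start by fixing the transition of $A'$ that induces the step $(\ell,\clockval)\goes{a}(\ell',\clockval')$; by the construction of $A'$ it has the form $\ell \goes{g \wedge \newclock<1, a, R'} \ell'$, obtained from a transition $\ell \goes{g,a,R} \ell'$ of $A$ with $R' = R$ or $R' = R\cup\{\newclock\}$. Splitting its guard as in the construction of $\hat A$ into the constraints $g_1,\dots,g_k$ over $\hat\clocks$ (the clause $\newclock<1$ being one of them) and the constraints $h_1,\dots,h_n$ over $\paramclock$, the candidate target in $\hat A$ is $(\ell', r[R'\setminus\{\paramclock\}], \alpha[R'\setminus\{\paramclock\}])$, with $t'=0$ if $\paramclock\in R$ and $t'=t$ otherwise.

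Next I would establish that this transition is enabled. Condition~(1) of the construction, $r\models g_1\wedge\cdots\wedge g_k$, follows because each $g_i$ is a simple constraint on a clock of $\hat\clocks$ with constant at most $M$, so its truth value is constant on the region~$r$; since $\restr{\clockval}{\hat\clocks}\in r$ and $\clockval$ satisfies the guard, we obtain $r\models g_i$. Condition~(2) follows from correspondence condition~3: were some $h_i$ an equality $\paramclock = c$, then $\clockval(\paramclock)=c\in\N$ would force $\iota(r,\alpha)=\EXACT$, so whenever $\iota(r,\alpha)\ne\EXACT$ no $h_i$ can be an equality.

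It then remains to check that the value $t$ of $\hatparamclock$ satisfies the rewritten guard $\hat h_1\wedge\cdots\wedge\hat h_n$. Using $t=\lfloor\clockval(\paramclock)\rfloor+f(r,\alpha)$ I would split on $\iota(r,\alpha)$: in the $\EXACT$ case $t=\clockval(\paramclock)\in\N$ and the operators are unchanged; in the $\LESS$ case $t=\lceil\clockval(\paramclock)\rceil$ (as $\clockval(\paramclock)\notin\N$) with $<$ relaxed to $\le$ and $\ge$ sharpened to $>$; in the $\MORE$ case $t=\lfloor\clockval(\paramclock)\rfloor$ with $\le$ sharpened to $<$ and $>$ relaxed to $\ge$. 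In each case a short check over the at most four possible operators shows that $\clockval(\paramclock)\bowtie c$ implies that $t$ satisfies the corresponding rewritten constraint~$\hat h_i$, the key facts being that $\clockval(\paramclock)\notin\N$ and that $c\in\N$. This rounding-versus-operator bookkeeping is the step I expect to be the main obstacle, since it is exactly where the design of $\iota$ and of the guard rewriting must line up.

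Finally I would verify that $\clockval'$ corresponds with $(r',\alpha',t')$. Condition~1 is the standard compatibility of region reset with clock reset, noting that restricting $R'$ to $\hat\clocks$ is precisely $R'\setminus\{\paramclock\}$. For conditions~2 and~3 I would distinguish two cases. If $\paramclock\notin R$, then neither $\paramclock$ nor $\newclock$ is reset, so $\iota(r',\alpha')=\iota(r,\alpha)$ and $t'=t$, and both conditions carry over verbatim from the hypothesis on~$\clockval$. If $\paramclock\in R$, then $R'=R\cup\{\newclock\}$ resets $\newclock$ as well, whence $r'\models\newclock=0$ and $\alpha'(\newclock)=0$, giving $\iota(r',\alpha')=\EXACT$ and $f(r',\alpha')=0$; since $\clockval'(\paramclock)=0$ and $t'=0$, conditions~2 and~3 then hold immediately.
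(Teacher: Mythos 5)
Your proposal reproduces the paper's proof of this lemma essentially step for step: fix the inducing transition of $A'$, split its guard into the constraints $g_1,\dots,g_k$ over $\hat\clocks$ and $h_1,\dots,h_n$ over $\paramclock$, check the two enabledness conditions of the construction, verify that $t$ satisfies the rewritten guard $\hat h_1\wedge\cdots\wedge\hat h_n$, and finally check the correspondence conditions at the target via the dichotomy $\paramclock\in R$ versus $\paramclock\notin R$. Your bookkeeping organized by the value of $\iota(r,\alpha)$, with $t=\lceil\nu(\paramclock)\rceil$ in the \LESS{} case and $t=\lfloor\nu(\paramclock)\rfloor$ in the \MORE{} case, is just a transposition of the paper's case split by operator ($<$ and $\le$ done explicitly, $>$ and $\ge$ ``similarly'') and is equally valid; your explicit justification of construction condition (2) via correspondence condition 3 is a detail the paper leaves implicit, and it is a welcome addition.

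There is, however, one concrete coverage gap. Your opening claim that every action transition of $A'$ has the form $\ell\goes{g\wedge\newclock<1,a,R'}\ell'$ is false: the construction of $A'$ also adds the self-loops $\ell\goes{\newclock=1,a,\{\newclock\}}\ell$, and these are action transitions as well. They escape your classification entirely, and your final dichotomy breaks on them: for a self-loop we have $\paramclock\notin R$ and $t'=t$, yet $\newclock$ \emph{is} reset, so the claim ``neither $\paramclock$ nor $\newclock$ is reset, hence $\iota$ and the conditions carry over verbatim'' fails as stated. The repair is short but must be said: when the self-loop fires, $\nu(\newclock)=1$, hence $r\models\newclock=1$; in every configuration to which the four lemmas are actually applied, $\alpha$ is a corner point of $r$ (equivalently, along runs of $A'$ one has $\nu(\paramclock)-\nu(\newclock)\in\N$, so condition 3 applies), which forces $\alpha(\newclock)=1$ and $\iota(r,\alpha)=\EXACT$; after resetting $\newclock$ one gets $r'\models\newclock=0$ and $\alpha'(\newclock)=0$, hence $\iota(r',\alpha')=\EXACT$ again, $f$ remains $0$, and $\nu(\paramclock)$ and $t$ are untouched, so correspondence is preserved. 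To be fair, the paper's own appendix is terse at exactly this point (its sentence ``if $t'=t$ then both $\paramclock,\newclock\notin R$'' has the same blind spot), but since the paper argues from a generic transition of $A'$ it at least constructs the matching transition of $\hat A$ for self-loop steps, whereas under your classification such steps are never matched at all; a complete blind proof needs this case spelled out.
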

\begin{proof}
The transition $(\ell,\nu) \goes{a} (\ell',\nu')$ is due to a~transition
$\ell \goes{g,a,R} \ell'$ of the timed automaton $A'$ where
$\nu \models g$ and $\nu' = \nu[R]$.
Let now $g_1$, \ldots, $g_k$ be all the simple clock constraints of $g$
that consider clocks from $\hat\clocks$ and $h_1$, \ldots, $h_n$ be
the remaining simple clock constraints, as in the construction of the
0/1-timed automaton $\hat A$.
We know that $r \models g_1 \wedge \cdots g_k$ as 
$\restr{\nu}{\hat\clocks} \in r$.
We thus know that $(\ell,r,\alpha) \goes{\hat{h}_1\wedge\cdots\hat{h}_n,a,
\hat R} (\ell',r[R\setminus\{\paramclock\}],
\alpha[R\setminus\{\paramclock\}]) = (\ell',r',\alpha')$,
where $\hat{h}_i$ and $\hat R$ are given by the construction.
We first need to show that $t$ satisfies all clock constraints $\hat{h}_i$.
\begin{itemize}
\item If $h_i = (\paramclock < e)$ then $\nu(\paramclock) < e$ and thus
$\lfloor \nu(\paramclock) \rfloor \le e - 1$. If $\iota(r,\alpha) = \LESS$
then $f(r,\alpha) = 1$ and $t \le (e-1) + 1 = e$, which satisfies 
the constraint $\hat{h}_i = (\hatparamclock \le e)$. Otherwise,
$f(r,\alpha) = 0$ and $t \le e-1$, which satisfies the constraint 
$\hat{h}_i = (\hatparamclock < e)$.
\item If $h_i = (\paramclock \le e)$ then $\nu(\paramclock) \le e$.
If $\nu(\paramclock) = e$ then $\iota(r,\alpha) = \EXACT$. Thus 
$f(r,\alpha)=0$ and $t = e$, which satisfies $\hatparamclock \le e$.
Otherwise, $\nu(\paramclock) < e$. Due to the reasoning in the previous item,
$t \le e$ if $\iota(r,\alpha)=\LESS$ and $t \le e - 1$ otherwise.
This means that $t$ satisfies $\hat h_i$.
\item The other two cases ($>$, $\ge$) are dealt with similarly.
\end{itemize}
We have thus shown that the transition $(\ell,r,\alpha,t) \goes{a}
(\ell',r',\alpha',t')$ exists. Here, $t' = 0$ if $\paramclock \in R$ and
$t' = t$ otherwise.

We now show that $\nu'$ corresponds with $(r',\alpha',t')$. Condition 1
is clearly satisfied. If $t' = t$ then both $\paramclock, \newclock \not\in R$, 
which means that both $\nu(\paramclock)$ and $\iota(r,\alpha)$ are unchanged.
If $t' = 0$ then $\paramclock, \newclock \in R$, which means that
$\nu(\paramclock) = 0$, $\iota(r',\alpha') = \EXACT$ and $f(r',\alpha') = 0$.
In both cases, $\nu'$ corresponds with $(r',\alpha',t')$.
\qed
\end{proof}

\begin{lemma}\label{lem:01-to-delay}
Let $(\ell,r,\alpha,t)$ be a~configuration of $\hat A$,
let $(\ell,r,\alpha,t) \goes{d} (\ell,r',\alpha',t')$ with $d \in \{0,1\}$
and $r' \models \newclock \le 1$,
and let $\nu$ correspond with $(r,\alpha,t)$.
Then there exists $d'$ such that $(\ell,\nu) \goes{d'} (\ell,\nu+d')$
and $\nu+d'$ corresponds with $(r',\alpha',t')$.
\end{lemma}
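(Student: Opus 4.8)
The plan is to prove Lemma~\ref{lem:01-to-delay} as a converse to Lemma~\ref{lem:delay-to-01}, showing that each $0/1$-delay transition in $\hat A$ can be matched by a suitable real-valued delay in $A'$ that preserves the correspondence relation. I would begin by recalling the structure of $0/1$-delay transitions: by the construction, from a configuration $(\ell,r,\alpha)$ there is either an outgoing $0$-transition (when $\alpha$ is a corner point of both $r$ and its successor $\suc(r)$) or an outgoing $1$-transition (when $\suc(\alpha)$ is also a corner point of $r$), but never both. Since the two cases $d=0$ and $d=1$ behave differently, I would split the proof accordingly and in each case read off the value $\iota(r,\alpha)$ from the position of $\alpha(\newclock)$ relative to $r$, just as is done in the forward direction.

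For the case $d=0$, the transition is $(\ell,r,\alpha,t)\goes{0}(\ell,\suc(r),\alpha,t)$ with $t'=t$ and $\alpha'=\alpha$. Here $\newclock$ crosses from one region into its successor while $\alpha(\newclock)$ stays fixed, so I would choose $d'$ to be exactly the delay that moves $\restr{\nu}{\hat\clocks}$ into $\suc(r)$ (existence of such a $d'$ follows from the definition of time successor). I would then verify Conditions~2 and 3 by the same three-way subdivision used in Lemma~\ref{lem:delay-to-01}, namely according to whether $\nu(\newclock)=0$, $\nu(\newclock)\in(0,1)$, or $\nu(\newclock)$ sits just below $1$; in each subcase the relationship between $\iota(r,\alpha)$ and $\iota(\suc(r),\alpha)$ determines whether $f$ changes and whether $\lfloor\nu(\paramclock)\rfloor$ increments, and these match the prescribed $t'=t$. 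For the case $d=1$, the transition is $(\ell,r,\alpha,t)\goes{1}(\ell,r,\suc(\alpha),t+1)$; here $\alpha(\newclock)$ advances from $0$ to $1$ inside the same region $r$, forcing $\iota(r,\alpha)=\MORE$ and $\iota(r,\suc(\alpha))=\LESS$. Since $\nu$ corresponds with $(r,\alpha,t)$ with $\iota=\MORE$ we have $\nu(\newclock)>0$, and I would pick $d'$ so that $\nu(\newclock)+d'$ stays within $r$ but raises $t$ by one via the $f$-offset switching from $0$ to $1$; I must check the side condition $r'\models\newclock\le 1$ permits this without overshooting into the reset self-loop.

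The main obstacle I expect is not the arithmetic but the bookkeeping of the boundary subcases: specifically, reconciling when a real delay in $A'$ lands exactly on an integer value of $\paramclock$ (forcing $\iota=\EXACT$) against the discrete jump in $t$. The correspondence definition encodes this integrality through Condition~3 ($\nu(\paramclock)\in\N\iff\iota(r,\alpha)=\EXACT$) together with the $f(r,\alpha)$ offset in Condition~2, and the delicate point is that a single $0/1$-step in $\hat A$ may correspond to a delay $d'$ that is not free but must be chosen to respect these boundaries. I would therefore argue existence of the appropriate $d'$ by appealing to the density of $\R$ and the fact that the fractional part of $\paramclock$ is always tracked by $\newclock$ (by the construction of $A'$), so that crossing an integer boundary for $\paramclock$ coincides exactly with $\newclock$ reaching $1$, which is precisely the moment the $\EXACT$/$\LESS$ transition occurs. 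Once this alignment is established, Conditions~2 and 3 follow by the same case analysis as in Lemma~\ref{lem:delay-to-01}, read in the reverse direction.
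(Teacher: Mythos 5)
Your proposal is correct and takes essentially the same route as the paper's proof: a case split on $d\in\{0,1\}$, where for $d=0$ you pick $d'$ moving $\restr{\nu}{\hat\clocks}$ into the successor region and reuse the case analysis of Lemma~\ref{lem:delay-to-01}, and for $d=1$ you let the offset $f$ switching from $0$ ($\MORE$) to $1$ ($\LESS$) absorb the increment of $t$. The one simplification you miss is that in the $d=1$ case the paper simply takes $d'=0$ (the correspondence conditions already hold for $\nu$ itself), which makes your bookkeeping about staying inside $r$ and your worry about the self-loop at $\newclock=1$ unnecessary --- that self-loop is an action transition, so no delay can ``overshoot'' into it.
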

\begin{proof}
If $d = 1$, we choose $d' = 0$ and show that $\nu + 0 = \nu$ corresponds
with $(r',\alpha',t+1)$.
Clearly, in this case $r' = r$, $\alpha(\newclock) = 0$, and 
$\alpha'(\newclock) = 1$,
which means that $\iota(r,\alpha) = \MORE$ while $\iota(r,\alpha') = \LESS$.
Therefore, $f(r,\alpha) = 0$, $f(r,\alpha') = 1$ and
if $\lfloor \nu(\paramclock) \rfloor = t$ then 
$\lfloor \nu(\paramclock) \rfloor + 1 = t + 1$.

Let us now assume that $d = 0$. This means that $\alpha' = \alpha$ while
$r'$ is the successor region of $r$. We choose an arbitrary $d'$ such that
$\restr{(\nu + d')}{\hat\clocks} \in r'$. To show that $\nu + d'$ corresponds
with $(r',\alpha,t)$, we can use the very same reasoning as in the proof of 
Lemma~\ref{lem:delay-to-01}.
\qed
\end{proof}

\begin{lemma}\label{lem:01-to-act}
Let $(\ell,r,\alpha,t)$ be a~configuration of $\hat A$
with $r \models \newclock \le 1$,
let $(\ell,r,\alpha,t) \goes{a} (\ell',r',\alpha',t')$,
and let $\nu$ correspond with $(r,\alpha,t)$.
Then $(\ell,\nu) \goes{a} (\ell',\nu')$
and $\nu'$ corresponds with $(r',\alpha',t')$.
\end{lemma}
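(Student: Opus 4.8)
The plan is to run the argument of Lemma~\ref{lem:act-to-01} in reverse. First I would unpack the action transition $(\ell,r,\alpha,t) \goes{a} (\ell',r',\alpha',t')$ of $\hat A$ by its definition: it arises from a~transition $\ell \goes{g,a,R} \ell'$ of $A'$, where $g_1,\ldots,g_k$ are the simple constraints of $g$ on clocks of $\hat\clocks$ and $h_1,\ldots,h_n$ are those on $\paramclock$, the triple $(\ell,r,\alpha)$ satisfies $r \models g_1 \wedge \cdots \wedge g_k$ together with the side condition that $\iota(r,\alpha) = \EXACT$ whenever some $h_i$ uses equality, the target region and corner point are $r' = r[R\setminus\{\paramclock\}]$ and $\alpha' = \alpha[R\setminus\{\paramclock\}]$, and $t$ satisfies every rewritten constraint $\hat h_i$. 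I would then define $\nu' = \nu[R]$ and aim to show both that $(\ell,\nu) \goes{a} (\ell',\nu')$ is a legal transition of $A'$ and that $\nu'$ corresponds with $(r',\alpha',t')$.

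To establish the transition I must verify $\nu \models g$. For the constraints $g_i$ on $\hat\clocks$ this is immediate: since $\restr{\nu}{\hat\clocks} \in r$, $r \models g_1 \wedge \cdots \wedge g_k$, and all constants occurring in these constraints are bounded by $M$, every valuation of $r$ satisfies them, in particular $\restr{\nu}{\hat\clocks}$. The real work is to verify the parametric constraints $h_i$, for which I would carry out the case analysis dual to that of Lemma~\ref{lem:act-to-01}: starting from $t \models \hat h_i$ together with correspondence Conditions~2 and~3, I recover $\nu(\paramclock) \models h_i$. For instance, if $h_i = (\paramclock < e)$ and $\iota(r,\alpha) = \LESS$, then $\hat h_i = (\hatparamclock \le e)$, so $t \le e$; since $f(r,\alpha) = 1$, Condition~2 gives $\lfloor\nu(\paramclock)\rfloor = t - 1 \le e - 1$, and Condition~3 gives $\nu(\paramclock) \notin \N$, whence $\nu(\paramclock) < \lfloor\nu(\paramclock)\rfloor + 1 \le e$. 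The remaining combinations of $\bowtie \in \{<,\le,\ge,>\}$ with $\iota(r,\alpha) \in \{\LESS,\MORE,\EXACT\}$ follow the same template, the equality case being governed by the side condition: an equality $h_i$ can occur only when $\iota(r,\alpha) = \EXACT$, in which case $\hat h_i$ is $h_i$ with $\hatparamclock$ for $\paramclock$, and Conditions~2--3 force $\nu(\paramclock) = \lfloor\nu(\paramclock)\rfloor = t$, so $t = e$ yields $\nu(\paramclock) = e$.

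Finally I would verify that $\nu' = \nu[R]$ corresponds with $(r',\alpha',t')$. Condition~1 holds because resetting the $\hat\clocks$-clocks of $R$ commutes with taking regions, so $\restr{\nu'}{\hat\clocks} \in r[R\setminus\{\paramclock\}] = r'$. For Conditions~2 and~3 I split on whether $\paramclock \in R$: if $\paramclock \notin R$ then, by the construction of $A'$, also $\newclock \notin R$, so nothing touching $\paramclock$ or $\newclock$ changes, $\iota(r',\alpha') = \iota(r,\alpha)$ and $t' = t$, and both conditions are inherited from $\nu$; if $\paramclock \in R$ then also $\newclock \in R$, giving $\nu'(\paramclock) = 0$, $\alpha'(\newclock) = 0$ with $r' \models \newclock = 0$, hence $\iota(r',\alpha') = \EXACT$, $f(r',\alpha') = 0$ and $t' = 0$, so the conditions hold trivially. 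This closing step is identical to the final paragraph of Lemma~\ref{lem:act-to-01}.

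I expect the main obstacle to be the parametric case analysis: although each individual case is a short computation, one must track the simultaneous effect of the floor, the offset $f(r,\alpha)$, and the operator rewriting ($<\,\mapsto\,\le$ under $\LESS$, $\le\,\mapsto\,<$ under $\MORE$, and so on), and in particular recognise that the side condition on equality is precisely what makes the $=$ case go through. Since this lemma is the exact mirror of Lemma~\ref{lem:act-to-01}, I would present one or two representative cases in full and note that the others are symmetric.
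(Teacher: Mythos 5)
Your proposal is correct and follows essentially the same route as the paper's proof: unpack the $\hat A$-transition into the underlying transition $\ell \goes{g,a,R} \ell'$ of $A'$, get the nonparametric conjuncts for free from $\restr{\nu}{\hat\clocks} \in r$, recover each $h_i$ from $t \models \hat h_i$ by the case analysis over $\iota(r,\alpha)$ and the offset $f(r,\alpha)$, and transfer the reset step verbatim from Lemma~\ref{lem:act-to-01}. Your explicit treatment of the equality case via the side condition $\iota(r,\alpha)=\EXACT$ is a point the paper leaves implicit, but it is a detail within the same argument, not a different approach.
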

\begin{proof}
The transition $(\ell,r,\alpha,t) \goes{a} (\ell',r',\alpha',t')$ 
in the semantics
is due to a~transition $(\ell,r,\alpha) \goes{\hat{h}_1\wedge\cdots\hat{h}_n,
a,\hat R} (\ell',r',\alpha')$ of the 0/1-timed automaton $\hat A$,
which was constructed from a~transition $\ell \goes{g,a,R} \ell'$ of the
timed automaton $A'$.

Clearly, if $r \models g_1 \wedge \cdots g_k$ then so does $\nu$.
We also need to show that $\nu \models h_1 \wedge \cdots h_n$.
We know that $t$ satisfies $\hat{h}_i$ for all $i$.
\begin{itemize}
\item If $h_i = (\paramclock < e)$ then either $\iota(r,\alpha) = \LESS$
and $\hat{h}_i = (\hatparamclock \le e)$ or $\hat{h}_i = (\hatparamclock < e)$.
In the first case, we know that $f(\alpha,r) = 1$, which means that 
$\lfloor \nu(\paramclock) \rfloor + 1 = t$, which implies that 
$\nu(\paramclock) < t \le e$. In the second case, 
$f(\alpha,r) = 0$ which means that $\lfloor\nu(\paramclock) \rfloor = t$
and thus $\nu(\paramclock) < t + 1 \le e$ as $t \le e - 1$.
\item If $h_i = (\paramclock \le e)$ then either $\iota(r,\alpha) = \MORE$
and $\hat{h}_i = (\hatparamclock < e)$ or $\hat{h}_i = (\hatparamclock \le e)$.
In both cases, if $t < e$ then $t \le e - 1$ and thus
$\lfloor \nu(\paramclock)\rfloor + f(r,\alpha) \le e-1$. This means that no
matter the value of $f(r,\alpha)$, $\nu(\paramclock) < e$.
If, in the second case, $t = e$ then 
$\lfloor \nu(\paramclock)\rfloor + f(r,\alpha) = e$. If $f(r,\alpha) = 1$
then $\nu(\paramclock) < e$. If $f(r,\alpha) = 0$ then this means that
$\iota(r,\alpha) = \EXACT$ and $\nu(\paramclock) = e$ as 
$\nu(\paramclock) \in \N$.
\item The remaining two cases ($>$, $\ge$) are dealt with similarly.
\end{itemize}

We now need to show that $\nu' = \nu[R]$ corresponds with $(r',\alpha',t')$.
This is shown exactly as in the proof of Lemma~\ref{lem:act-to-01}.
\qed
\end{proof}

The correctness of the construction is now a~corollary of the previous four
lemmata; this proves the main theorem.

\end{document}